\newcommand{\R}{{\mathbb  R}}
\numberwithin{equation}{section}
\newtheorem{thm}{\bf Theorem}[section]
\newtheorem{lem}[thm]{\bf Lemma}
\newtheorem{prop}[thm]{\bf Proposition}
\theoremstyle{remark}
\newtheorem{rem}{\bf Remark}[section]
\begin{document}

\title{The Steady States of Isotone Electric Systems}
\author{Dan Com\u anescu\\
{\small Department of Mathematics, West University of Timi\c soara}\\
{\small Bd. V. P\^ arvan, No 4, 300223 Timi\c soara, Rom\^ ania}\\
{\small E-mail address: dan.comanescu@e-uvt.ro}}
\date{}

\maketitle

\begin{abstract}
The steady states of an isotone electric system are described by an isotone function with respect to the componentwise order. When there are steady states, we highlight a dominant steady state and we study its domain of attraction for the fixed point iteration method.
\end{abstract}

\noindent {\bf Keywords:} componentwise order; isotone functions; fixed points; fixed point iteration method; nonnegative matrices; irreducible matrices \\
{\bf MSC Subject Classification 2020:} 06F30, 15Bxx, 41A65, 47Hxx, 54F05, 54H25.

\section{Introduction}

Many practical situations involve the use of electrical systems with external sources furnishing constant power to the loads. The studies of electrical systems with constant power loads (CPLs) use a wide range of mathematical techniques, see \cite{sanchez}.  A significant part of the mathematical representations of these systems is represented by the dynamical systems. In these situations the determination of the steady states and their study is of great importance.

As is presented in the paper \cite{polyak}, a linear time invariant DC system with CPLs is described as
\begin{equation}
{\bf Y}(s)=G(s){\bf U}(s)+{\bf k},
\end{equation}
where $s$ is the Laplace variable, $G(s)\in \mathcal{M}_n(\R)$, ${\bf Y}(s)=\mathcal{L}\{{\bf y}(t)\}\in \R^n$, ${\bf U}(s)=\mathcal{L}\{{\bf u}(t)\}\in \R^n$, and ${\bf k}\in \R^n$. The port variables ${\bf y,u}\in \R^n$, with the components $y_1,\dots,y_n$ and $u_1,\dots,u_n$, are connected to CPLs by
\begin{equation}
-y_i(t)u_i(t)=P_i,\,\,\,i\in \{1,\dots,n\},\,t\geq 0.
\end{equation}
A steady state $({\bf y,u})\in \R^n\times \R^n$, see \cite{polyak}, is a solution of the problem
\begin{equation}
\begin{cases}
{\bf y}=\overline{M}{\bf u}+{\bf k} \\
y_iu_i=-P_i, \,\,\,i\in \{1,\dots,n\},
\end{cases}
\end{equation}
where $\overline{M}=G(0)$ and $P_1,\dots, P_n\in \R$.
This problem has the equivalent description
\begin{equation}\label{ecuatia-fix-11} 
{\bf y}={\bf k}-\overline{M}\left({\bf P}\circ \frac{1}{\bf y}\right),
\end{equation}
where the vector $\frac{1}{\bf y}\in \R^n$ has the components $\frac{1}{y_1},\dots, \frac{1}{y_n}$, ${\bf P}$ has the components $P_1,\dots,P_n$, and ''$\circ $'' is the Hadamard product\footnote{If ${\bf y},{\bf z}\in \R^n$ have the components $y_1,\dots, y_n$ and ${z}_1,\dots, {z}_n$, then the Hadamard product (or the Schur product, \cite{chandler}) ${\bf y}\circ{\bf z}\in \R^n$ has the components $y_1{z}_1,\dots,y_n{z}_n$ (see \cite{horn}).}.
When $\overline{M}$ is an invertible matrix we can write 
\begin{equation}\label{ecuatia-matveev}
A{\bf y}+\left({\bf P}\circ \frac{1}{\bf y}\right)=A{\bf k},
\end{equation}
with $A=\overline{M}^{-1}$. This is the form of the problem studied in \cite{matveev}.  

In the papers \cite{sanchez} and \cite{polyak} it
is considered as a reasonable physical assumption the positive definiteness of the symmetric part of $\overline{M}$. In \cite{matveev} it is assumed that $A=\overline{M}^{-1}$ is a symmetric positive definite matrix and the off-diagonal elements are nonpositive (Stieltjes matrix, see \cite{micchelli}). 
In this case $\overline{M}$ is a symmetric nonnegative matrix.

In many practical situations the unknown vector ${\bf y}$ has all components with same sign.  Possibly making a change of variable we can assume that ${\bf y}$ is positive (see  \cite{matveev}).

In the paper \cite{matveev}, under the assumption that $A=\overline{M}^{-1}$ is a Stieltjes matrix, it was proven
that if positive steady states exist, then there it is a distinguished one which dominates - componentwise - all the other ones.

By using the notation $M=\overline{M}\text{diag}({\bf P})$, the equation \eqref{ecuatia-fix-11} can be written 
\begin{equation}\label{ecuatia-fix-increasing-321}
{\bf y}={\bf k}-M\frac{1}{\bf y}.
\end{equation}

In this paper we study the positive solutions of {\bf the isotone electric system} \eqref{ecuatia-fix-increasing-321} using the assumption that $M$ is a nonnegative matrix.
The positive solutions of \eqref{ecuatia-fix-increasing-321} are the fixed points of the function $T_{{\bf k},M}:\left(\R_+^*\right)^n\to \R^n$,
$T_{{\bf k},M}({\bf y})={\bf k}-M\frac{1}{\bf y}$. This is an isotone function with respect to the componentwise order\footnote{${\bf x}\leq {\bf y}\,\,\leftrightarrow \,\,x_i\leq y_i \,\,\forall i\in \{1,\dots,n\}$ (see \cite{ehrgott}).} $\leq$ on $\R^n$; for ${\bf y},\overline{\bf y}\in \left(\R_+^*\right)^n$ with ${\bf y}\leq \overline{\bf y}$ we have $T_{{\bf k},M}({\bf y})\leq T_{{\bf k},M}(\overline{\bf y})$.

In Section \ref{T-increasing-general} we highlight some properties of the fixed points of a continuous, isotone function $T:\left(\R_+^*\right)^n\to \R^n$. When $T$ is bounded from above and the set of the fixed points is nonempty, we have {\bf  the dominant fixed point of $T$} which dominates all other fixed point. Moreover, it dominates the $\omega$-limit set of $T$ for the fixed point iteration method. Also, we present a subset of the domain of attraction of the dominant fixed point of $T$ for the fixed point iteration method. 

In Section \ref{particular-case-11} we study the case when $T$ verifies 
the matrix condition \eqref{inequality-Lipschitz-123}. Such a function is a generalization of the functions $T_{{\bf k},M}$.

The Section \ref{increasing-power-563} is dedicated to the study of the fixed points of the functions $T_{{\bf k},M}$. We pay special attention to a function defined by an irreducible matrix. We apply our methods and results to the study of the steady states of a DC linear circuit with two CPLs which was studied, with other objectives and methods, in \cite{polyak} and \cite{matveev}.

At the end of the paper we present some notions and results used in the main sections.

\section{The fixed points of a continuous, isotone function}\label{T-increasing-general}

In this section we work with $T:\left(\R_+^*\right)^n\to \R^n$ a continuous, isotone function. We pay a special attention to the case when $T$ is bounded from above.

First, we present some properties of the $\omega$-limit set of $T$ for the fixed point iteration method (see Appendix \ref{fixed-point-iteration-section}). We note by $\text{Fit}_{T,{\bf y}_0}$ the sequence generated by the fixed point iteration method which start from ${\bf y}_0$. The $\omega$-limit set of $\text{Fit}_{T,{\bf y}_0}$ is $\omega_T({\bf y}_0)$. When the $\omega$-limit set of $\text{Fit}_{T,{\bf y}_0}$ has an element we also note this element by $\omega_T({\bf y}_0)$.
\begin{lem}\label{omega-limit-1}
Let be ${\bf y}_0,\overline{\bf y}_0\in D_T^{\infty}$ such that ${\bf y}_0\leq \overline{\bf y}_0$.
\begin{enumerate}[(i)]
\item If $\emph{Fit}_{T,\overline{\bf y}_0}$ is bounded from above\footnote{There exists ${\bf w}\in \left(\R_+^*\right)^n$ such that for all $r\in \mathbb{N}$ we have ${\bf y}_r\leq {\bf w}$.}, ${\bf y}\in \omega_T({\bf y}_0)$, then there exists $\overline{\bf y}\in \omega_T(\overline{\bf y}_0)$ such that ${\bf y}\leq \overline{\bf y}$.
If $\emph{Fit}_{T,\overline{\bf y}_0}$ is convergent, then $\omega_T({\bf y}_0)\subset \lfloor {\bf 0}, \omega_T(\overline{\bf y}_0)\rceil:=\{{\bf y}\in \R^n\,|\,{\bf 0}\leq {\bf y}\leq \omega_T(\overline{\bf y}_0)\}$.

\item If $\overline{\bf y}\in \omega_T(\overline{\bf y}_0)$, then there exists ${\bf y}\in \omega_T({\bf y}_0)$ such that ${\bf y}\leq \overline{\bf y}$.
 If $\emph{Fit}_{T,{\bf y}_0}$ is convergent, then $\omega_T(\overline{\bf y}_0)\subset \lfloor \omega_T({\bf y}_0), \infty\lceil:=\{{\bf y}\in \R^n\,|\,\omega_T({\bf y}_0)\leq {\bf y}\}$.
\item If $\emph{Fit}_{T,{\bf y}_0}$, $\emph{Fit}_{T,\overline{\bf y}_0}$ are convergent, then $\omega_T({\bf y}_0)\leq \omega_T(\overline{\bf y}_0)$.
\item If $\emph{Fit}_{T,{\bf y}_0}$ is convergent and $\omega_T({\bf y}_0)\in(\R_+^*)^n$, then $\omega_T({\bf y}_0)$is a fixed point of $T$.
\end{enumerate}
\end{lem}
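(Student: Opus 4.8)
The plan is to reduce all four parts to a single monotone-coupling observation. Since $\mathbf{y}_0,\overline{\mathbf{y}}_0\in D_T^\infty$, the iterates $\mathbf{y}_r=T^r(\mathbf{y}_0)$ and $\overline{\mathbf{y}}_r=T^r(\overline{\mathbf{y}}_0)$ are defined and lie in $(\R_+^*)^n$ for every $r$, so in particular $\mathbf{y}_r,\overline{\mathbf{y}}_r\ge\mathbf{0}$ always. An induction using only isotonicity shows that $\mathbf{y}_0\le\overline{\mathbf{y}}_0$ propagates to $\mathbf{y}_r\le\overline{\mathbf{y}}_r$ for all $r$: the base case is the hypothesis, and the step is $\mathbf{y}_{r+1}=T(\mathbf{y}_r)\le T(\overline{\mathbf{y}}_r)=\overline{\mathbf{y}}_{r+1}$. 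I would then lean on two standard facts throughout: the order relation $\le$ is closed, so componentwise inequalities survive passage to the limit, and Bolzano–Weierstrass extracts convergent subsequences from bounded sequences. The positivity of the iterates supplies the lower bound $\mathbf{0}$ appearing in the statements for free.

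For part (i) I would take $\mathbf{y}\in\omega_T(\mathbf{y}_0)$ with a subsequence $\mathbf{y}_{r_k}\to\mathbf{y}$. The dominating subsequence $\overline{\mathbf{y}}_{r_k}$ is bounded above by hypothesis and below by $\mathbf{0}$, so Bolzano–Weierstrass gives a further subsequence $\overline{\mathbf{y}}_{r_{k_j}}\to\overline{\mathbf{y}}\in\omega_T(\overline{\mathbf{y}}_0)$; passing to the limit in $\mathbf{y}_{r_{k_j}}\le\overline{\mathbf{y}}_{r_{k_j}}$ yields $\mathbf{y}\le\overline{\mathbf{y}}$. When $\text{Fit}_{T,\overline{\mathbf{y}}_0}$ converges, $\omega_T(\overline{\mathbf{y}}_0)$ is a single point, so this reads $\mathbf{0}\le\mathbf{y}\le\omega_T(\overline{\mathbf{y}}_0)$, the claimed inclusion. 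Part (ii) is the mirror image: starting from a subsequence $\overline{\mathbf{y}}_{r_k}\to\overline{\mathbf{y}}$, the dominated subsequence $\mathbf{y}_{r_k}$ is bounded above by the convergent (hence bounded) sequence $\overline{\mathbf{y}}_{r_k}$ and below by $\mathbf{0}$, so I again thin to $\mathbf{y}_{r_{k_j}}\to\mathbf{y}\in\omega_T(\mathbf{y}_0)$ and take limits; note no extra boundedness hypothesis is needed here, since the dominating side already converges along the chosen indices.

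Parts (iii) and (iv) are then short. For (iii), convergence collapses both $\omega$-limit sets to single points, and letting $r\to\infty$ in $\mathbf{y}_r\le\overline{\mathbf{y}}_r$ gives $\omega_T(\mathbf{y}_0)\le\omega_T(\overline{\mathbf{y}}_0)$. For (iv), writing $\mathbf{y}^\ast=\omega_T(\mathbf{y}_0)$, continuity of $T$ at $\mathbf{y}^\ast$—legitimate precisely because $\mathbf{y}^\ast\in(\R_+^*)^n$ lies in the domain—gives $T(\mathbf{y}^\ast)=\lim_r T(\mathbf{y}_r)=\lim_r\mathbf{y}_{r+1}=\mathbf{y}^\ast$, so $\mathbf{y}^\ast$ is a fixed point; the positivity hypothesis is exactly what rules out the limit sitting on the boundary where $T$ need not be continuous. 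The one genuinely delicate point is the nested-subsequence bookkeeping in (i): I must first fix the indices $r_k$ making the dominated sequence converge to the prescribed $\mathbf{y}$, and only then thin them to make the dominating sequence converge, so that both limits are taken along the \emph{same} index set and the coupling inequality $\mathbf{y}_{r_{k_j}}\le\overline{\mathbf{y}}_{r_{k_j}}$ survives. Everything else is routine once the coupling and the closedness of $\le$ are in hand.
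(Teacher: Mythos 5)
Your proposal is correct and follows essentially the same route as the paper: establish the coupling $\mathbf{y}_r\le\overline{\mathbf{y}}_r$ by isotonicity, use Bolzano--Weierstrass on the order-bounded (hence compact) segment to extract a convergent subsequence along the same indices, and pass to the limit in the closed relation $\le$, with $(iv)$ following from continuity of $T$ on the open domain. The only cosmetic difference is that you prove $(iii)$ by taking limits directly rather than invoking $(ii)$, which changes nothing of substance.
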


\begin{proof}
Between the terms of $\text{Fit}_{T,{\bf y}_0}$ and $\text{Fit}_{T,\overline{\bf y}_0}$ we have ${\bf y}_r\leq \overline{\bf y}_r$, for all $r\in \mathbb{N}$ (see Lemma \ref{isotone-fixed-sequence-23}).

$(i)$ There exists the subsequence $({\bf y}_{r_q})_{q\in \mathbb{N}}$ of $\text{Fit}_{T,{\bf y}_0}$ such that ${\bf y}_{r_q}\stackrel{q\to \infty}{\longrightarrow} {\bf y}$. The subsequence $(\overline{\bf y}_{r_q})_{q\in \mathbb{N}}$ of $\text{Fit}_{T,\overline{\bf y}_0}$ is included in the compact set $\lfloor{\bf 0},{\bf w}\rceil$ where $\text{Fit}_{T,\overline{\bf y}_0}$ is bounded from above by ${\bf w}\in \left(\R_+^*\right)^n$. We can extract a subsequence of $(\overline{\bf y}_{r_q})_{q\in \mathbb{N}}$ with the limit $\overline{\bf y}$. We deduce that $\overline{\bf y}\in \omega(\overline{\bf y}_0)$ and ${\bf y}\leq \overline{\bf y}$.

$(ii)$ There exists the subsequence $(\overline{\bf y}_{r_q})_{q\in \mathbb{N}}$ of $\text{Fit}_{T,\overline{\bf y}_0}$ such that $\overline{\bf y}_{r_q}\stackrel{q\to \infty}{\longrightarrow} \overline{\bf y}$. This subsequence is bounded from above by $\overline{\bf w}$. The subsequence $({\bf y}_{k_q})_{q\in \mathbb{N}}$ of $\text{Fit}_{T,{\bf y}_0}$ is included in the compact set $\lfloor{\bf 0},\overline{\bf w}\rceil$. We can extract a subsequence of $({\bf y}_{r_q})_{q\in \mathbb{N}}$ with the limit ${\bf y}$. We deduce that ${\bf y}\in \omega_T({\bf y}_0)$ and ${\bf y}\leq \overline{\bf y}$.

For $(iii)$ we apply $(ii)$ and for $(iv)$ we use the continuity of $T$.
\end{proof}

By using the notations
\begin{equation*}
S_{T}^+:=\{{\bf y}\in (\R_+^*)^n\,|\,{\bf y}\leq T({\bf y})\},\,\,S_T^-:=\{{\bf y}\in (\R_+^*)^n\,|\,T({\bf y})\leq {\bf y}\} 
\end{equation*}
we observe that the set of the fixed points of $T$ is $\Phi_{T}= S_{T}^+\bigcap S_{T}^-$. 
In what follows we present some properties of the above sets. 

\begin{lem}\label{dominant-element-increasing}
\begin{enumerate}[(i)]
\item $S_T^+$ is invariant under $T$ (i.e. ${\bf y}\in S_T^+\Rightarrow T({\bf y})\in S_T^+$) and $S_T^+\subset D_T^{\infty}$.
\item If ${\bf y}\in S_T^+$, then $\lfloor{\bf y},\infty\lceil
$ is invariant under $T$ and $\lfloor{\bf y},\infty\lceil \subset D_T^{\infty}$.
\item If ${\bf y}\in S_T^+$ is a maximal element\footnote{If ${\bf y}\leq \overline{\bf y}$ with $\overline{\bf y}\in S_T^+$, then ${\bf y}=\overline{\bf y}$.} of $(S_T^+,\leq)$, then ${\bf y}\in \Phi_T$.
\item If ${\bf y}_0\in S_T^+$, then the fixed point iteration sequence $\emph{Fit}_{T,{\bf y}_0}$ is an isotone sequence. If this sequence is bounded from above, then it is convergent and ${\bf y}_0\leq \omega_T({\bf y}_0)\in\Phi_T$.
\item If $T$ is bounded from above\footnote{There exists ${\bf k}\in \R^n$ such that for all ${\bf y}\in \left(\R_+^*\right)^n$ we have $T({\bf y})\leq {\bf k}$.}, then $S_T^+$ is bounded from above\footnote{there exists ${\bf k}\in  \left(\R_+^*\right)^n$ such that ${\bf y}\in S_T^+\Rightarrow {\bf y}\leq {\bf k}$.}.
\end{enumerate}

\end{lem}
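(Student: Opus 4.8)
The plan is to handle the five items in sequence, treating (i) as the engine that drives (iii) and (iv), while (ii) and (v) follow independently from isotonicity alone. For (i), I would start from ${\bf y}\in S_T^+$, so ${\bf y}\leq T({\bf y})$. The first thing to check is that $T({\bf y})$ returns to the domain: since ${\bf 0}<{\bf y}\leq T({\bf y})$, indeed $T({\bf y})\in(\R_+^*)^n$. Applying the isotone map $T$ to ${\bf y}\leq T({\bf y})$ yields $T({\bf y})\leq T(T({\bf y}))$, which is precisely $T({\bf y})\in S_T^+$; hence $S_T^+$ is invariant. The inclusion $S_T^+\subset D_T^\infty$ then comes by induction, since invariance keeps every iterate inside $S_T^+\subset(\R_+^*)^n$, so the fixed point iteration never leaves the domain.

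For (ii) I would argue directly along the same lines. Given ${\bf y}\in S_T^+$ and any ${\bf z}$ with ${\bf y}\leq{\bf z}$, isotonicity gives $T({\bf y})\leq T({\bf z})$, and combining with ${\bf y}\leq T({\bf y})$ gives ${\bf y}\leq T({\bf z})$, so $T({\bf z})\in\lfloor{\bf y},\infty\lceil$. Membership in $(\R_+^*)^n$ is again automatic because every point of $\lfloor{\bf y},\infty\lceil$ dominates the positive vector ${\bf y}$, and the inclusion in $D_T^\infty$ is once more an induction from invariance.

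Part (iii) is then a one-line deduction from (i): if ${\bf y}$ is maximal in $S_T^+$, then since ${\bf y}\leq T({\bf y})$ with $T({\bf y})\in S_T^+$, maximality forces ${\bf y}=T({\bf y})$, i.e. ${\bf y}\in\Phi_T$. For (iv), invariance shows each iterate satisfies ${\bf y}_r\leq T({\bf y}_r)={\bf y}_{r+1}$, so $\text{Fit}_{T,{\bf y}_0}$ is an isotone sequence; if it is bounded above, componentwise monotone convergence gives a limit $\omega_T({\bf y}_0)$ with ${\bf y}_0\leq\omega_T({\bf y}_0)$, which therefore lies in $(\R_+^*)^n$, and Lemma \ref{omega-limit-1}(iv) upgrades it to a fixed point. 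Finally, (v) is immediate: if $T({\bf y})\leq{\bf k}$ for all arguments, then every ${\bf y}\in S_T^+$ obeys ${\bf y}\leq T({\bf y})\leq{\bf k}$, and positivity of such ${\bf y}$ lets the bound ${\bf k}$ be taken in $(\R_+^*)^n$.

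All steps are short, and the only genuine care needed—the point I would treat as the main obstacle—is the bookkeeping of the domain constraint $(\R_+^*)^n$. Because $S_T^+$ and $\lfloor{\bf y},\infty\lceil$ are meaningful only for strictly positive arguments, each invariance claim must first confirm that the image $T({\bf y})$ (or the dominating vector ${\bf z}$) is again positive before it can be declared a member; this is what makes the chain ${\bf 0}<{\bf y}\leq T({\bf y})$ the recurring key inequality throughout the argument.
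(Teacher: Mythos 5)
Your proof is correct and follows essentially the same route as the paper's: the chain ${\bf 0}<{\bf y}\leq T({\bf y})$ plus isotonicity for (i)--(iii) and (v), induction for the monotonicity in (iv), and the $\omega$-limit lemma to conclude that the limit is a fixed point. The extra care you take with the domain constraint $(\R_+^*)^n$ is implicit in the paper's argument, so there is no substantive difference.
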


\begin{proof}
$(i)$ If ${\bf y}\in S_T^+$, then ${\bf y}\leq T({\bf y})$, $T({\bf y})\in \left(\R_+^*\right)^n$, and $T({\bf y})\leq T(T({\bf y}))$.

$(ii)$ For ${\bf y}\leq \overline{\bf y}$ we have ${\bf y}\leq T({\bf y})\leq T(\overline{\bf y})$.

$(iii)$ From $(i)$ we obtain $T({\bf y})\in S_T^+$. From hypothesis  ${\bf y}\leq T({\bf y})$ and ${\bf y}$ is a maximal element of $S_T^+$. We deduce that ${\bf y}=T({\bf y})$.

$(iv)$ The monotonicity of $\text{Fit}_{T,{\bf y}_0}$ can be proved by induction.  We apply Lemma \ref{omega-limit-1}.

$(v)$ If $T$ is bounded from above by ${\bf k}$ and ${\bf y}\in S_T^+$, then ${\bf y}\leq T({\bf y})\leq {\bf k}$.
\end{proof}

Analogously we obtain the following results.

\begin{lem}\label{invariant-increasing-067}
\begin{enumerate}[(i)]
\item If ${\bf y}_0\in S_T^-$, then $\emph{Fit}_{T,{\bf y}_0}$ is antitone. If ${\bf y}_0\in S_T^-\cap D_T^{\infty}$, then $\emph{Fit}_{T,{\bf y}_0}$ is convergent and $\omega_T({\bf y}_0)\in \R_+^n$.
\item If ${\bf y}_0\in S_T^-\cap D_T^{\infty}$ and $\omega_T({\bf y}_0)\in \left(\R_+^*\right)^n$, then $\omega_T({\bf y}_0)\in \Phi_T$.
\end{enumerate}
\end{lem}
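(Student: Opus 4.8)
The plan is to mirror the proof of Lemma \ref{dominant-element-increasing}(iv), interchanging the two inequalities that define $\Phi_T=S_T^+\cap S_T^-$. For the antitone claim in $(i)$ I would argue by induction that ${\bf y}_{r+1}\leq {\bf y}_r$ for every index $r$ for which the iterate is defined. The base case is immediate: ${\bf y}_0\in S_T^-$ says exactly $T({\bf y}_0)\leq {\bf y}_0$, i.e. ${\bf y}_1\leq {\bf y}_0$. For the inductive step, assuming ${\bf y}_{r+1}\leq {\bf y}_r$ with both iterates in $(\R_+^*)^n$, the isotonicity of $T$ gives ${\bf y}_{r+2}=T({\bf y}_{r+1})\leq T({\bf y}_r)={\bf y}_{r+1}$ (equivalently one may invoke the comparison Lemma \ref{isotone-fixed-sequence-23} against the constant sequence). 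Hence $\text{Fit}_{T,{\bf y}_0}$ decreases as far as it is defined.

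For the convergence statement I would add the hypothesis ${\bf y}_0\in D_T^{\infty}$, which guarantees that every iterate stays in $(\R_+^*)^n$, so the sequence is infinite and strictly positive componentwise. Each coordinate sequence is then real, monotonically decreasing and bounded below by $0$; by the monotone convergence theorem it has a nonnegative limit. Therefore $\text{Fit}_{T,{\bf y}_0}$ converges, its $\omega$-limit set is the single limit point, and that point lies in $\R_+^n$ (possibly on the boundary of the positive orthant).

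For $(ii)$ the extra assumption $\omega_T({\bf y}_0)\in (\R_+^*)^n$ places the limit inside the open domain on which $T$ is continuous. Passing to the limit in ${\bf y}_{r+1}=T({\bf y}_r)$ and using continuity then yields $\omega_T({\bf y}_0)=T(\omega_T({\bf y}_0))$, so the limit is a fixed point; this is precisely the situation of Lemma \ref{omega-limit-1}(iv), which I would cite directly.

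The one genuine subtlety, and the reason this is not a verbatim dual of Lemma \ref{dominant-element-increasing}(iv), is that $S_T^-$ need not be invariant under $T$: from $T({\bf y})\leq {\bf y}$ one cannot conclude $T({\bf y})\in (\R_+^*)^n$, since the inequality bounds $T({\bf y})$ from above rather than below by a positive vector. Consequently $S_T^-\not\subset D_T^{\infty}$ in general, which is exactly why the hypothesis ${\bf y}_0\in D_T^{\infty}$ must be imposed by hand, and why the decreasing limit may reach the boundary of the positive orthant, forcing the strict-positivity assumption in $(ii)$ before continuity of $T$ can be used. I expect this invariance failure to be the only point requiring care; the remainder is the routine monotone-convergence argument.
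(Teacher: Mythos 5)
Your proof is correct and is essentially the argument the paper intends: the paper gives no explicit proof of this lemma, stating only that it follows ``analogously'' to Lemma \ref{dominant-element-increasing}, and your induction plus monotone convergence plus continuity is exactly that dual argument. Your remark that $S_T^-$ fails to be invariant under $T$ (so that ${\bf y}_0\in D_T^{\infty}$ must be assumed and the limit may land on the boundary of the positive orthant) correctly identifies the one genuine asymmetry with the $S_T^+$ case.
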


In what follows we present an important result about the existence of the fixed points of $T$ and about the structure of the set $\Phi_T$.

\begin{thm}\label{x-Bix} Suppose that $S_T^+\neq \emptyset$ and it is bounded from above.

\begin{enumerate}[(i)]
\item The vector
$
{\bf y}_T^{\Box}= (\sup\limits_{{\bf y}\in S_T^+} y_1, \dots, \sup\limits_{{\bf y}\in S_T^+} y_n)^t
$
is a maximal element of $(S_T^+,\leq)$.
It is an element of  $\Phi_T$ and it dominates all the other elements of $\Phi_T$.
\item If ${\bf y}_0\in S_T^-\cap \lfloor {\bf y}_T^{\Box},\infty \lceil$, then $\emph{Fit}_{T,{\bf y}_0}$ is convergent and $\omega_T({\bf y}_0)={\bf y}_T^{\Box}$.
\end{enumerate}
\end{thm}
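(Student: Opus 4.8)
The plan is to handle the two parts in sequence, with part (i) supplying the structural facts that drive part (ii).

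For part (i), I would first check that ${\bf y}_T^{\Box}$ is well-defined and lies in $(\R_+^*)^n$: since $S_T^+$ is nonempty and bounded from above, each coordinate supremum $\sup_{{\bf y}\in S_T^+} y_i$ is a finite real number, and since $S_T^+\subset (\R_+^*)^n$ any fixed element of $S_T^+$ forces this supremum to be strictly positive. The crucial step is to show ${\bf y}_T^{\Box}\in S_T^+$, i.e. ${\bf y}_T^{\Box}\leq T({\bf y}_T^{\Box})$. For this I would argue coordinatewise: every ${\bf y}\in S_T^+$ satisfies ${\bf y}\leq {\bf y}_T^{\Box}$, so isotonicity gives ${\bf y}\leq T({\bf y})\leq T({\bf y}_T^{\Box})$; hence each coordinate of $T({\bf y}_T^{\Box})$ is an upper bound for $\{y_i\,|\,{\bf y}\in S_T^+\}$ and therefore dominates its least upper bound $(y_T^{\Box})_i$. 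Thus $T({\bf y}_T^{\Box})$ dominates ${\bf y}_T^{\Box}$, so ${\bf y}_T^{\Box}\in S_T^+$. Being the coordinatewise supremum of $S_T^+$ and itself a member, it is the greatest, hence a maximal, element of $(S_T^+,\leq)$, so Lemma \ref{dominant-element-increasing}(iii) yields ${\bf y}_T^{\Box}\in \Phi_T$. Finally, since every fixed point ${\bf y}$ satisfies ${\bf y}=T({\bf y})$ and hence lies in $S_T^+$, the supremum property gives ${\bf y}\leq {\bf y}_T^{\Box}$, which is the claimed dominance.

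For part (ii), fix ${\bf y}_0\in S_T^-\cap \lfloor {\bf y}_T^{\Box},\infty \lceil$. Since ${\bf y}_T^{\Box}\in S_T^+$ by part (i), Lemma \ref{dominant-element-increasing}(ii) tells us that $\lfloor {\bf y}_T^{\Box},\infty \lceil$ is invariant under $T$ and contained in $D_T^{\infty}$; in particular ${\bf y}_0\in S_T^-\cap D_T^{\infty}$, so Lemma \ref{invariant-increasing-067}(i) shows that $\emph{Fit}_{T,{\bf y}_0}$ is antitone and convergent with limit $\omega_T({\bf y}_0)\in \R_+^n$. The invariance of $\lfloor {\bf y}_T^{\Box},\infty \lceil$ keeps every iterate in that set, so ${\bf y}_T^{\Box}\leq {\bf y}_r$ for all $r$, and passing to the limit gives ${\bf y}_T^{\Box}\leq \omega_T({\bf y}_0)$; in particular $\omega_T({\bf y}_0)\in (\R_+^*)^n$. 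Lemma \ref{invariant-increasing-067}(ii) then makes $\omega_T({\bf y}_0)$ a fixed point, and the dominance established in part (i) forces $\omega_T({\bf y}_0)\leq {\bf y}_T^{\Box}$. The two inequalities combine to give $\omega_T({\bf y}_0)={\bf y}_T^{\Box}$.

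The only genuinely delicate point is the supremum argument in part (i): one must exploit that ${\bf y}_T^{\Box}$ is the \emph{least} upper bound, not merely some upper bound, of $S_T^+$, together with monotonicity of $T$, to turn the statement ``$T({\bf y}_T^{\Box})$ is an upper bound for $S_T^+$'' into ``$T({\bf y}_T^{\Box})$ dominates ${\bf y}_T^{\Box}$''. Everything else is a bookkeeping assembly of the earlier lemmas, and no fresh compactness or continuity input is needed beyond what those lemmas already package.
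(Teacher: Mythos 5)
Your proof is correct and follows essentially the same route as the paper: your coordinatewise least-upper-bound argument showing ${\bf y}\leq T({\bf y})\leq T({\bf y}_T^{\Box})$ for all ${\bf y}\in S_T^+$ is exactly the paper's observation that $\lfloor {\bf y}_T^{\Box},\infty\lceil=\bigcap_{{\bf y}\in S_T^+}\lfloor{\bf y},\infty\lceil$ is invariant under $T$, just stated without the set-theoretic wrapping. Part (ii) likewise matches the paper's argument (invariance of $\lfloor{\bf y}_T^{\Box},\infty\lceil$ plus antitone convergence from $S_T^-$), differing only in that you close with the dominance of ${\bf y}_T^{\Box}$ in $\Phi_T$ where the paper notes that ${\bf y}_T^{\Box}$ is the unique element of $S_T^+$ in $\lfloor{\bf y}_T^{\Box},\infty\lceil$.
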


\begin{proof} $(i)$ We observe that ${\bf y}_T^{\Box}\in \left(\R_+^*\right)^n$ and for ${\bf y}\in S_T^+$ we have ${\bf y}\leq {\bf y}_T^{\Box}$.
First, we prove that  $\lfloor {\bf y}_T^{\Box},\infty \lceil=\bigcap_{{\bf y}\in S_T^+}\lfloor {\bf y},\infty \lceil$. We deduce that $\lfloor {\bf y}_T^{\Box},\infty \lceil\subset\bigcap_{{\bf y}\in S_T^+}\lfloor {\bf y},\infty \lceil$. If $\overline{\bf y}\in \bigcap_{{\bf y}\in S_T^+}\lfloor {\bf y},\infty \lceil$, then for all ${\bf y}\in S_T^+$ we have ${\bf y}\leq \overline{\bf y}$ and we deduce that ${\bf y}_T^{\Box}\leq \overline{\bf y}$. We obtain $\bigcap_{{\bf y}\in S_T^+}\lfloor {\bf y},\infty \lceil\subset \lfloor {\bf y}_T^{\Box},\infty \lceil$.

By using Lemma \ref{dominant-element-increasing} we obtain that $\lfloor {\bf y}_T^{\Box},\infty \lceil$ is an invariant set under $T$ and consequently, ${\bf y}_T^{\Box}\in S_T^+$. By definition of ${\bf y}_T^{\Box}$ we have that it is a maximal element of $(S_T^+,\leq)$. 
From Lemma \ref{dominant-element-increasing} we obtain that ${\bf y}_T^{\Box}$ is an element of  $\Phi_T$ and it dominates all the other element of $\Phi_T$.

$(ii)$ Because $\lfloor {\bf y}_T^{\Box},\infty \lceil$ is invariant under $T$ we deduce that $\text{Fit}_{T,{\bf y}_0}\subset \lfloor {\bf y}_T^{\Box},\infty \lceil$, ${\bf y}_0\in D_T^{\infty}$. From Lemma \ref{dominant-element-increasing} the sequence is convergent and $\omega_T({\bf y}_0)\in \lfloor {\bf y}_T^{\Box},\infty \lceil$. The unique element of $S_T^+$ in $\lfloor {\bf y}_T^{\Box},\infty \lceil$ is ${\bf y}_T^{\Box}$. 
\end{proof}

When $S_T^+\neq \emptyset$ and it is bounded from above, the fixed point ${\bf y}_T^{\Box}$ is called {\bf the dominant fixed point of $T$}. 

The set of the points ${\bf y}_0\in D_T^{\infty}$ for which $\text{Fit}_{T,{\bf y}_0}$ is convergent with the limit $\omega_T({\bf y}_0)={\bf y}\in \Phi_T$ is often called the domain of attraction of ${\bf y}$ (for the fixed point iteration method). We find an ordered segment containing the $\omega$-limit set of $T$ and we present some results about the domain of attraction of the dominant fixed point of $T$.

\begin{thm}\label{x-Bix-bounded-112}
Suppose that $T$ is bounded from above by ${\bf k}\in \left(\R^*_+\right)^n$ and $\Phi_T\neq \emptyset$.
\begin{enumerate}[(i)]
\item $S_T^+\neq \emptyset$, it is bounded from above, and ${\bf y}_T^{\Box}=(\sup\limits_{{\bf y}\in \Phi_T} y_1, \dots, \sup\limits_{{\bf y}\in \Phi_T} y_n)^t$.
\item  If ${\bf y}_0\in \lfloor {\bf k},\infty \lceil$, then $\emph{Fit}_{T,{\bf y}_0}$ is antitone and convergent and we have $\omega_T({\bf y}_0)={\bf y}_T^{\Box}$.
\item If ${\bf y}_0\in \left(\R_+^*\right)^n$, then $\omega_T({\bf y}_0)\subset \lfloor {\bf 0},{\bf y}_T^{\Box}\rceil$. The $\omega$-limit set of $T$ verifies $\Omega_T\subset \lfloor {\bf 0},{\bf y}_T^{\Box}\rceil$.
\item If ${\bf y}_0, \overline{\bf y}_0\in \left(\R_+^*\right)^n$ such that $\overline{\bf y}_0\leq {\bf y}_0$, $\emph{Fit}_{T,\overline{\bf y}_0}$ is convergent, and $\omega_T(\overline{\bf y}_0)={\bf y}_T^{\Box}$, then $\emph{Fit}_{T,{\bf y}_0}$ is convergent and $\omega_T({\bf y}_0)={\bf y}_T^{\Box}$.
\item If ${\bf y}_0\in \lfloor {\bf y}_T^{\Box},\infty \lceil$, then $\emph{Fit}_{T,{\bf y}_0}$ is a convergent sequence and its limit is $\omega_T({\bf y}_0)={\bf y}_T^{\Box}$.
\end{enumerate}
\end{thm}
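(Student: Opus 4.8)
The plan is to prove the five parts in order, using Theorem~\ref{x-Bix} as the engine and feeding each later part the conclusions of the earlier ones, together with the comparison statements in Lemma~\ref{omega-limit-1}, Lemma~\ref{dominant-element-increasing}, Lemma~\ref{invariant-increasing-067}, and the monotonicity of the iteration (Lemma~\ref{isotone-fixed-sequence-23}). For (i) I would first note that every fixed point satisfies ${\bf y}\leq T({\bf y})$, so $\Phi_T\subset S_T^+$ and hence $\Phi_T\neq\emptyset$ already forces $S_T^+\neq\emptyset$; boundedness from above of $S_T^+$ is exactly Lemma~\ref{dominant-element-increasing}(v) since $T$ is bounded from above. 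Theorem~\ref{x-Bix} then applies and furnishes ${\bf y}_T^{\Box}=(\sup_{{\bf y}\in S_T^+}y_1,\dots,\sup_{{\bf y}\in S_T^+}y_n)^t\in\Phi_T$. To replace $S_T^+$ by $\Phi_T$ in the supremum I would use $\Phi_T\subset S_T^+$ for one inequality and the facts that ${\bf y}_T^{\Box}\in\Phi_T$ and (again by Theorem~\ref{x-Bix}(i)) dominates every element of $\Phi_T$ for the reverse inequality, pinning each coordinate supremum to the corresponding coordinate of ${\bf y}_T^{\Box}$.

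For (ii), the decisive preliminary observations are that ${\bf k}\leq{\bf y}_0$ gives $T({\bf y}_0)\leq{\bf k}\leq{\bf y}_0$, so ${\bf y}_0\in S_T^-$, and that ${\bf y}_T^{\Box}=T({\bf y}_T^{\Box})\leq{\bf k}\leq{\bf y}_0$, so ${\bf y}_0\in\lfloor{\bf y}_T^{\Box},\infty\lceil$. The antitone claim is then Lemma~\ref{invariant-increasing-067}(i), and convergence with limit ${\bf y}_T^{\Box}$ is precisely Theorem~\ref{x-Bix}(ii); the iterates remain in $D_T^{\infty}$ because they are bounded below by the positive vector ${\bf y}_T^{\Box}$. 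For (iii), given ${\bf y}_0\in(\R_+^*)^n$ I would choose any $\overline{\bf y}_0$ with ${\bf y}_0\leq\overline{\bf y}_0$ and ${\bf k}\leq\overline{\bf y}_0$ (for instance the coordinatewise maximum, or ${\bf y}_0+{\bf k}$). Part (ii) makes $\text{Fit}_{T,\overline{\bf y}_0}$ convergent with limit ${\bf y}_T^{\Box}$, whence Lemma~\ref{omega-limit-1}(i) yields $\omega_T({\bf y}_0)\subset\lfloor{\bf 0},\omega_T(\overline{\bf y}_0)\rceil=\lfloor{\bf 0},{\bf y}_T^{\Box}\rceil$. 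Taking the union over all admissible starting points gives $\Omega_T\subset\lfloor{\bf 0},{\bf y}_T^{\Box}\rceil$.

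The substantive step is (iv), which I expect to be the main obstacle: it asserts that enlarging a starting point that already converges to ${\bf y}_T^{\Box}$ cannot destroy convergence, so the orbit issued from the larger point must be squeezed back down. Here I would again introduce an upper comparison point ${\bf z}_0$ with ${\bf y}_0\leq{\bf z}_0$ and ${\bf k}\leq{\bf z}_0$; by (ii) its orbit converges to ${\bf y}_T^{\Box}$. Using the isotonicity of the iteration (Lemma~\ref{isotone-fixed-sequence-23}) I would show by induction that $\overline{\bf y}_r\leq{\bf y}_r\leq{\bf z}_r$ for every $r$. This does double duty: the lower bound $\overline{\bf y}_r>{\bf 0}$ keeps every iterate positive and thereby establishes ${\bf y}_0\in D_T^{\infty}$, while the two-sided bound sandwiches ${\bf y}_r$ between sequences both tending to ${\bf y}_T^{\Box}$, so the coordinatewise squeeze theorem gives convergence of $\text{Fit}_{T,{\bf y}_0}$ to ${\bf y}_T^{\Box}$.

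Finally (v) is an immediate corollary of (iv): taking $\overline{\bf y}_0={\bf y}_T^{\Box}$, whose orbit is the constant sequence ${\bf y}_T^{\Box}$ and hence trivially convergent with limit ${\bf y}_T^{\Box}$, the hypothesis ${\bf y}_T^{\Box}\leq{\bf y}_0$ of (v) is exactly the ordering required by (iv), which then delivers the conclusion. The only real care needed throughout is the repeated verification that the relevant orbits never leave $(\R_+^*)^n$, and the uniform way to secure this is to keep each orbit bounded below by a positive fixed point or by a lower comparison orbit; this is why the existence of ${\bf y}_T^{\Box}$ from part (i) is used so heavily in every subsequent part.
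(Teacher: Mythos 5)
Your proposal is correct and follows essentially the same route as the paper: part (i) via $\Phi_T\subset S_T^+$ and Theorem~\ref{x-Bix}, part (ii) via the inclusion $\lfloor{\bf k},\infty\lceil\subset S_T^-\cap\lfloor{\bf y}_T^{\Box},\infty\lceil$ and Theorem~\ref{x-Bix}(ii), part (iii) by comparison with a start point above both ${\bf y}_0$ and ${\bf k}$ together with Lemma~\ref{omega-limit-1}, part (iv) by the three-term sandwich $\overline{\bf y}_r\leq{\bf y}_r\leq{\bf z}_r$ and the squeeze theorem, and part (v) by taking the constant orbit at ${\bf y}_T^{\Box}$. No gaps; your extra attention to keeping orbits in $D_T^{\infty}$ via positive lower bounds is a point the paper leaves implicit.
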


\begin{proof}
$(ii)$ We observe that $\lfloor {\bf k},\infty \lceil\subset S_T^-\cap \lfloor {\bf y}_T^{\Box},\infty \lceil$ and we apply Theorem \ref{x-Bix}.

$(iii)$ There exists $\overline{\overline{\bf y}}_0\in \R^n$ such that ${\bf y}_0\leq \overline{\overline{\bf y}}_0$ and ${\bf k}\leq  \overline{\overline{\bf y}}_0$. From $(ii)$ we have that $\text{Fit}_{T,\overline{\overline{\bf y}}_0}$ is convergent and $\omega_T(\overline{\overline{\bf y}}_0)={\bf y}_T^{\Box}$. From Lemma \ref{omega-limit-1} we obtain $\omega_T({\bf y}_0)\subset \lfloor {\bf 0},{\bf y}_T^{\Box}\rceil$.

 $(iv)$ There exists $\overline{\overline{\bf y}}_0\in \R^n$ such that ${\bf y}_0\leq \overline{\overline{\bf y}}_0$ and ${\bf k}\leq  \overline{\overline{\bf y}}_0$. By induction we obtain that $\overline{\bf y}_r\leq {\bf y}_r\leq \overline{\overline{\bf y}}_r$, $r\in \mathbb{N}$. $\text{Fit}_{T,\overline{\overline{\bf y}}_0}$ is convergent and $\omega_T(\overline{\overline{\bf y}}_0)={\bf y}_T^{\Box}$. From Squeeze Theorem we deduce the announced result.  
 
$(v)$ $\text{Fit}_{T,{\bf y}_T^{\Box}}$ is convergent and its limit is $\omega_T({\bf y}_T^{\Box})={\bf y}_T^{\Box}$. We apply $(iv)$.
\end{proof}

In more restrictive conditions for $T$, we find new subsets  of domain of attraction of the dominant fixed point of $T$.

\begin{thm}\label{domain-of-attraction}
Suppose that $T$ is bounded from above, it is a concave function\footnote{$T$ is a concave function with respect to $\leq$ if $\lambda T({\bf y})+(1-\lambda)T(\overline{\bf y}) \leq T(\lambda{\bf y}+(1-\lambda)\overline{\bf y})$, when ${\bf y}, \overline{\bf y}\in \left(\R_+^*\right)^n$ and $\lambda\in (0,1)$.}, $\Phi_T$ has at least two elements, and there are no chain\footnote{A subset of a partially ordered set is a chain if it is totally ordered with respect to the induced order.} with three elements in $\Phi_T$. If \footnote{$\rfloor {\bf y}, \infty\rceil:=\{\overline{\bf y}\in \R^n\,|\,{\bf y}<\overline{\bf y}\}$}  ${\bf y}_0\in\bigcup_{{\bf y}\in \Phi_T\backslash\{{\bf y}_T^{\Box}\}}\rfloor{\bf y},\infty\lceil$, then $\emph{Fit}_{T,{\bf y}_0}$ is convergent and its limit is $\omega_T({\bf y}_0)={\bf y}_T^{\Box}$.
\end{thm}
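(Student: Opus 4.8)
The plan is to reduce the claim, via the comparison principle of Theorem~\ref{x-Bix-bounded-112}$(iv)$, to establishing convergence to ${\bf y}_T^{\Box}$ from a conveniently chosen starting point that sits on the segment joining a non-dominant fixed point to ${\bf y}_T^{\Box}$. First I note that $\Phi_T\neq\emptyset$ together with $T$ bounded from above puts us in the setting of Theorems~\ref{x-Bix} and~\ref{x-Bix-bounded-112} (the upper bound may be taken in $(\R_+^*)^n$, since it dominates any positive fixed point), so the dominant fixed point ${\bf y}_T^{\Box}$ exists and dominates every element of $\Phi_T$. Let ${\bf y}_0\in\rfloor{\bf y}^*,\infty\lceil$ for some ${\bf y}^*\in\Phi_T\setminus\{{\bf y}_T^{\Box}\}$, that is, ${\bf y}^*<{\bf y}_0$ strictly in every component.

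The crucial observation comes from concavity. For $\lambda\in(0,1)$ set ${\bf z}_\lambda:=\lambda{\bf y}_T^{\Box}+(1-\lambda){\bf y}^*$, a convex combination of two positive fixed points, hence ${\bf z}_\lambda\in(\R_+^*)^n$. Concavity together with the identities $T({\bf y}_T^{\Box})={\bf y}_T^{\Box}$ and $T({\bf y}^*)={\bf y}^*$ gives $T({\bf z}_\lambda)\geq\lambda{\bf y}_T^{\Box}+(1-\lambda){\bf y}^*={\bf z}_\lambda$, so ${\bf z}_\lambda\in S_T^+$. By Lemma~\ref{dominant-element-increasing}$(iv)$ the sequence $\text{Fit}_{T,{\bf z}_\lambda}$ is isotone, and since it remains in $S_T^+\subset\lfloor{\bf 0},{\bf y}_T^{\Box}\rceil$ it is bounded above and therefore converges to a fixed point $\widehat{\bf y}\in\Phi_T$ with ${\bf z}_\lambda\leq\widehat{\bf y}\leq{\bf y}_T^{\Box}$.

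Now the hypothesis that $\Phi_T$ contains no chain with three elements pins the limit down. Since ${\bf y}_T^{\Box}\neq{\bf y}^*$ while ${\bf y}_T^{\Box}\geq{\bf y}^*$, the vector ${\bf z}_\lambda-{\bf y}^*=\lambda({\bf y}_T^{\Box}-{\bf y}^*)$ is nonnegative and nonzero, so ${\bf y}^*\lneq{\bf z}_\lambda\leq\widehat{\bf y}$; in particular $\widehat{\bf y}\neq{\bf y}^*$. Were $\widehat{\bf y}\neq{\bf y}_T^{\Box}$ as well, then ${\bf y}^*\lneq\widehat{\bf y}\lneq{\bf y}_T^{\Box}$ would be three distinct, pairwise comparable elements of $\Phi_T$, a chain of length three, contradicting the hypothesis. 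Hence $\widehat{\bf y}={\bf y}_T^{\Box}$, so $\text{Fit}_{T,{\bf z}_\lambda}$ converges to ${\bf y}_T^{\Box}$ for every $\lambda\in(0,1)$.

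It remains to choose $\lambda$ so that ${\bf z}_\lambda\leq{\bf y}_0$ and then invoke the comparison. The inequality ${\bf z}_\lambda\leq{\bf y}_0$ reads $\lambda({\bf y}_T^{\Box}-{\bf y}^*)\leq{\bf y}_0-{\bf y}^*$, and here I would use precisely the strictness built into the hypothesis ${\bf y}^*<{\bf y}_0$: every component of ${\bf y}_0-{\bf y}^*$ is strictly positive, so for each index the constraint on $\lambda$ is a positive number (vacuous where $({\bf y}_T^{\Box}-{\bf y}^*)_i=0$), and any $\lambda\in(0,1)$ below the minimum of these finitely many positive ratios works. With this $\lambda$ one has ${\bf z}_\lambda\leq{\bf y}_0$ and $\omega_T({\bf z}_\lambda)={\bf y}_T^{\Box}$, so Theorem~\ref{x-Bix-bounded-112}$(iv)$ yields the convergence of $\text{Fit}_{T,{\bf y}_0}$ to ${\bf y}_T^{\Box}$. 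The main obstacle is the middle step: recognizing that concavity forces the whole segment $[{\bf y}^*,{\bf y}_T^{\Box}]$ into $S_T^+$ and then verifying the two strict comparabilities needed to apply the no-three-chain hypothesis; the choice of $\lambda$ is where the strict inequality in the hypothesis is indispensable.
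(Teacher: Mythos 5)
Your proposal is correct and follows essentially the same route as the paper's proof: use concavity to place the segment joining a non-dominant fixed point to ${\bf y}_T^{\Box}$ inside $S_T^+$, run the isotone iteration from a point of that segment chosen below ${\bf y}_0$, use the no-three-chain hypothesis to force the limit to be ${\bf y}_T^{\Box}$, and conclude by the comparison statement of Theorem~\ref{x-Bix-bounded-112}$(iv)$. Your write-up merely spells out the choice of $\lambda$ and the chain contradiction in more detail than the paper does.
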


 \begin{proof}
Let be ${\bf y}\in \Phi_T\backslash\{{\bf y}_T^{\Box}\}$ such that ${\bf y}< {\bf y}_0$. We have ${\bf y}\lneq {\bf y}_T^{\Box}$. There exists $\lambda_{\bf y}\in (0,1)$ such that $\lambda_{\bf y}{\bf y}+(1-\lambda_{\bf y}){\bf y}_T^{\Box}< {\bf y}_0$.

For $\overline{\overline{\bf y}}, \overline{\bf y}\in S_T^+$ and $\lambda\in [0,1]$ we have $\lambda\overline{\overline{\bf y}}+(1-\lambda)\overline{\bf y}\leq \lambda T(\overline{\overline{\bf y}})+(1-\lambda) T(\overline{\bf y})\leq T(\lambda\overline{\overline{\bf y}}+(1-\lambda)\overline{\bf y})$. We obtain that $S_T^+$ is a convex set and $\lambda_{\bf y}{\bf y}+(1-\lambda_{\bf y}){\bf y}_T^{\Box}\in S_T^+$.
Consequently,   
$\text{Fit}_{T,\lambda_{\bf y}{\bf y}+(1-\lambda_{\bf y}){\bf y}_T^{\Box}}$ is isotone, it is convergent and ${\bf y}\lneq \omega_T(\lambda_{\bf y}{\bf y}+(1-\lambda_{\bf y}){\bf y}_T^{\Box})\leq {\bf y}_T^{\Box}$. These vectors form a fixed point chain. We deduce that $\omega_T(\lambda_{\bf y}{\bf y}+(1-\lambda_{\bf y}){\bf y}_T^{\Box})= {\bf y}_T^{\Box}$. We apply Theorem \ref{x-Bix-bounded-112}.
 \end{proof}

We present some necessary and sufficient conditions for the existence of fixed points.

\begin{thm}
Suppose that $T$ is bounded from above by ${\bf k}\in \R^n$. The following statements are equivalent:
\begin{enumerate}[(i)]
\item $\Phi_T\neq \emptyset$.
\item $\Omega_T\cap \left(\R_+^*\right)^n\neq \emptyset$.
\item For all ${\bf y}_0$ with ${\bf k}\leq {\bf y}_0$ we have ${\bf y}_0\in D_T^{\infty}$, $\emph{Fit}_{T,{\bf y}_0}$ is convergent and $\omega_T({\bf y}_0)\in \left(\R_+^*\right)^n$.
\end{enumerate}
\end{thm}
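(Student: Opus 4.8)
The plan is to close the cycle $(i)\Rightarrow(iii)\Rightarrow(ii)\Rightarrow(i)$, with the last implication carrying essentially all of the content. For $(i)\Rightarrow(iii)$, note first that if $\Phi_T\neq\emptyset$ there is a fixed point ${\bf y}^*=T({\bf y}^*)\in(\R_+^*)^n$, and since $T$ is bounded above by ${\bf k}$ we get ${\bf 0}<{\bf y}^*=T({\bf y}^*)\leq{\bf k}$, so ${\bf k}\in(\R_+^*)^n$. The hypotheses of Theorem~\ref{x-Bix-bounded-112} are then met, and part $(ii)$ of that theorem says exactly that every ${\bf y}_0\in\lfloor{\bf k},\infty\lceil$ lies in $D_T^{\infty}$ and generates a convergent antitone sequence with limit ${\bf y}_T^{\Box}\in\Phi_T\subset(\R_+^*)^n$; this is precisely $(iii)$. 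For $(iii)\Rightarrow(ii)$ I would simply choose a positive ${\bf y}_0\geq{\bf k}$ (for instance with $i$-th component $\max(k_i,1)$); statement $(iii)$ gives $\omega_T({\bf y}_0)\in(\R_+^*)^n$, and this single point lies in $\Omega_T$, so $\Omega_T\cap(\R_+^*)^n\neq\emptyset$.

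The substantive step is $(ii)\Rightarrow(i)$. Fix ${\bf y}^*\in\omega_T({\bf y}_0)\cap(\R_+^*)^n$ with ${\bf y}_0\in D_T^{\infty}$, and a subsequence ${\bf y}_{r_q}\to{\bf y}^*$. Because $T$ is bounded above by ${\bf k}$, every iterate of index $\geq 1$ satisfies ${\bf y}_r=T({\bf y}_{r-1})\leq{\bf k}$; letting $q\to\infty$ gives ${\bf 0}<{\bf y}^*\leq{\bf k}$, hence ${\bf k}\in(\R_+^*)^n$ and $T({\bf k})\leq{\bf k}$, i.e.\ ${\bf k}\in S_T^-$. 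Writing ${\bf w}_m:=T^m({\bf k})$, I would prove by induction on $m$ that ${\bf w}_m$ is defined, positive, and satisfies ${\bf w}_m\geq{\bf y}^*$. Assuming ${\bf w}_0,\dots,{\bf w}_{m-1}\in(\R_+^*)^n$, the term ${\bf w}_m=T({\bf w}_{m-1})$ is well defined; for each $q$ with $r_q\geq m+1$ one has ${\bf y}_{r_q-m}\leq{\bf k}$, and applying the isotone map $m$ times along both orbits (all intermediate iterates lying in $(\R_+^*)^n$, by the induction hypothesis on the ${\bf w}_j$ and by positivity of the orbit) yields ${\bf y}_{r_q}=T^m({\bf y}_{r_q-m})\leq T^m({\bf k})={\bf w}_m$. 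Letting $q\to\infty$ gives ${\bf y}^*\leq{\bf w}_m$, and since ${\bf y}^*>{\bf 0}$ this forces ${\bf w}_m\in(\R_+^*)^n$, closing the induction.

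Consequently ${\bf k}\in D_T^{\infty}$, so by Lemma~\ref{invariant-increasing-067} the sequence $\text{Fit}_{T,{\bf k}}$ is antitone; being bounded below by ${\bf y}^*>{\bf 0}$ it converges, with $\omega_T({\bf k})\geq{\bf y}^*$, whence $\omega_T({\bf k})\in(\R_+^*)^n$. Lemma~\ref{invariant-increasing-067}$(ii)$ then gives $\omega_T({\bf k})\in\Phi_T$, so $\Phi_T\neq\emptyset$, which is $(i)$. I expect the main obstacle to be exactly this induction: one must manufacture a positive lower bound for the decreasing sequence $T^m({\bf k})$, and the decisive idea is to compare, for each fixed $m$, the iterate $T^m({\bf k})$ with the orbit segment $T^m({\bf y}_{r_q-m})={\bf y}_{r_q}$ started $m$ steps \emph{before} a near-approach to ${\bf y}^*$, rather than the naive and useless bound ${\bf y}_{r_q+m}\leq{\bf w}_m$ obtained by iterating forward from a point already known to sit below ${\bf k}$.
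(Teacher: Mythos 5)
Your proof is correct and follows the same strategy as the paper's: $(i)\Rightarrow(iii)$ via Theorem \ref{x-Bix-bounded-112}, the immediate $(iii)\Rightarrow(ii)$, and for $(ii)\Rightarrow(i)$ an antitone iteration started at or above ${\bf k}$ (which lies in $S_T^-$ since $T$ is bounded above by ${\bf k}$) that is bounded below by the positive $\omega$-limit point and hence converges to a positive fixed point by Lemma \ref{invariant-increasing-067}. The only difference is bookkeeping: the paper starts from an auxiliary $\overline{\bf y}_0$ dominating both ${\bf y}_0$ and ${\bf k}$, so that the synchronized comparisons of Lemma \ref{isotone-fixed-sequence-23} and Lemma \ref{omega-limit-1} apply directly, whereas you iterate from ${\bf k}$ itself and recover the lower bound $T^m({\bf k})\geq{\bf y}^*$ by your time-shifted induction --- both routes are valid and of essentially the same depth.
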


\begin{proof}
$(i)\Rightarrow (iii)$ is obtained from Theorem \ref{x-Bix-bounded-112}. For $(iii)\Rightarrow (ii)$ we observe that $\omega_T({\bf y}_0)\in \Omega_T\cap \left(\R_+^*\right)^n$, ${\bf k}\leq {\bf y}_0$.

$(ii)\Rightarrow (i)$. Let be ${\bf y}\in \Omega_T\cap \left(\R_+^*\right)^n$. There exists ${\bf y}_0\in D_T^{\infty}$ such that ${\bf y}\in \omega_T({\bf y}_0)$. We consider $\overline{\bf y}_0$ with the properties ${\bf y}_0\leq \overline{\bf y}_0$ and ${\bf k}\leq \overline{\bf y}_0$. From Lemma \ref{isotone-fixed-sequence-23} we obtain that $\overline{\bf y}_0\in D_T^{\infty}$.
From Lemma \ref{omega-limit-1} we have $\overline{\bf y}\in \omega_T(\overline{\bf y}_0)$ with ${\bf y}\leq \overline{\bf y}$. From Lemma \ref{invariant-increasing-067} we deduce that $\text{Fit}_{T,\overline{\bf y}_0}$ is convergent and  $\omega_T(\overline{\bf y}_0)=\overline{\bf y}\in \Phi_T$.
\end{proof}

\begin{thm}\label{existence-231}
Suppose that $T$ is bounded from above and there are ${\bf y}^{\min},{\bf y}^{\max}\in \left(\R_+^*\right)^n$ so ${\bf y}^{\min}\leq {\bf y}^{\max}$, $\Phi_T\subset \lfloor{\bf y}^{\min}, {\bf y}^{\max}\rceil$, and $\lfloor {\bf y}^{\max},\infty\lceil\subset S_T^-$. The following statements are equivalent:
\begin{enumerate}[(i)]
\item $\Phi_T\neq \emptyset$.
\item For all ${\bf y}_0$ with ${\bf y}^{\max}\leq {\bf y}_0$ we have $T(\emph{Fit}_{T,{\bf y}_0})\subset \lfloor{\bf y}^{\min}, \infty\lceil$.
\end{enumerate}
\end{thm}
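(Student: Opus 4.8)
The plan is to prove the two implications separately, with the positivity threshold ${\bf y}^{\min}$ doing the essential work in each direction, and with the comparison and monotone-convergence machinery of Lemmas \ref{dominant-element-increasing} and \ref{invariant-increasing-067} supplying the routine steps.

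For $(i)\Rightarrow(ii)$ I would fix any ${\bf y}^{*}\in\Phi_T$. Since $\Phi_T\subset\lfloor{\bf y}^{\min},{\bf y}^{\max}\rceil$ we have ${\bf y}^{\min}\leq{\bf y}^{*}\leq{\bf y}^{\max}$, and ${\bf y}^{*}\in S_T^{+}$ because a fixed point satisfies ${\bf y}^{*}\leq T({\bf y}^{*})$. By Lemma \ref{dominant-element-increasing}$(ii)$ the segment $\lfloor{\bf y}^{*},\infty\lceil$ is invariant under $T$ and contained in $D_T^{\infty}$. Any ${\bf y}_0\geq{\bf y}^{\max}$ then satisfies ${\bf y}_0\geq{\bf y}^{*}$, hence ${\bf y}_0\in\lfloor{\bf y}^{*},\infty\lceil$ and every iterate of $\text{Fit}_{T,{\bf y}_0}$ stays in $\lfloor{\bf y}^{*},\infty\lceil$. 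In particular each image $T({\bf y}_r)={\bf y}_{r+1}$ is $\geq{\bf y}^{*}\geq{\bf y}^{\min}$, which is precisely $T(\text{Fit}_{T,{\bf y}_0})\subset\lfloor{\bf y}^{\min},\infty\lceil$.

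For $(ii)\Rightarrow(i)$ I would apply the hypothesis to the single starting point ${\bf y}_0={\bf y}^{\max}$ (any ${\bf y}_0\geq{\bf y}^{\max}$ serves equally well). Because $\lfloor{\bf y}^{\max},\infty\lceil\subset S_T^{-}$, we have ${\bf y}_0\in S_T^{-}$, so $\text{Fit}_{T,{\bf y}_0}$ is antitone by Lemma \ref{invariant-increasing-067}$(i)$. The content of $(ii)$ is that every image satisfies ${\bf y}_{r+1}=T({\bf y}_r)\geq{\bf y}^{\min}>{\bf 0}$; this keeps each iterate inside $\left(\R_+^{*}\right)^n$, so the sequence never terminates and ${\bf y}_0\in D_T^{\infty}$. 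An antitone sequence bounded below by ${\bf y}^{\min}$ converges, and its limit obeys $\omega_T({\bf y}_0)\geq{\bf y}^{\min}>{\bf 0}$, hence $\omega_T({\bf y}_0)\in\left(\R_+^{*}\right)^n$. By Lemma \ref{invariant-increasing-067}$(ii)$ this limit is a fixed point, giving $\Phi_T\neq\emptyset$.

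The step I expect to be the main obstacle is the correct handling of well-definedness in the direction $(ii)\Rightarrow(i)$: condition $(ii)$ is really a statement about the image set of a sequence that is only defined as long as the iterates remain in $\left(\R_+^{*}\right)^n$, so one must not assume the infinite sequence at the outset. The key is to read $(ii)$ as a self-propagating bound — the strict lower bound ${\bf y}^{\min}>{\bf 0}$ it supplies for each image is exactly what forces the next iterate back into $\left(\R_+^{*}\right)^n$ and thereby establishes ${\bf y}_0\in D_T^{\infty}$ by induction. Once this bootstrapping is made explicit, the monotone-convergence and continuity arguments of Lemma \ref{invariant-increasing-067} close the proof without further difficulty; note that the boundedness of $T$ from above is not actually needed in the argument beyond ensuring the hypotheses are coherent.
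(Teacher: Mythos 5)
Your proof is correct and follows essentially the same route as the paper: for $(i)\Rightarrow(ii)$ anchor the iterates above a fixed point ${\bf y}^{*}\geq{\bf y}^{\min}$ via the invariance of $\lfloor{\bf y}^{*},\infty\lceil$, and for $(ii)\Rightarrow(i)$ use antitonicity on $S_T^{-}$ together with the lower bound ${\bf y}^{\min}$ to get a convergent sequence with positive limit, hence a fixed point by Lemma \ref{invariant-increasing-067}. Your explicit bootstrapping of well-definedness in the direction $(ii)\Rightarrow(i)$, and your remark that boundedness of $T$ from above is not actually used, are both correct and somewhat more careful than the paper's terse version of the same argument.
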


\begin{proof}
$(i)\Rightarrow (ii)$. Let be ${\bf y}^*\in \Phi_T$ and let be ${\bf y}_0$ with ${\bf y}^{\max}\leq {\bf y}_0$. Because ${\bf y}^*\in D_T^{\infty}$, from Lemma \ref{isotone-fixed-sequence-23}, we obtain that ${\bf y}_0\in D_T^{\infty}$. Because ${\bf y}_0\in S_T^-$, by using Lemma \ref{dominant-element-increasing}, we have that $\text{Fit}_{T,{\bf y}_0}$ is convergent and it is contained in  $\lfloor{\bf y}^{*}, \infty\lceil$. We deduce that $\omega_T({\bf y}_0)\in \Phi_T$. By hypotheses, ${\bf y}^{\min}\leq \omega_T({\bf y}_0)\leq T({\bf y}_r)$ for all $r$. 

$(ii)\Rightarrow (i)$. $\text{Fit}_{T,{\bf y}_0}$ is antitone and it is contained in  $\lfloor{\bf y}^{\min}, \infty\lceil$. We deduce that ${\bf y}_0\in D_T^{\infty}$, the sequence is convergent and $\omega_T({\bf y}_0)\in \Phi_T$.
\end{proof}

\begin{rem}
If $\text{Fit}_{T,{\bf y}_0}$ has a finite number of terms, then $\rfloor {\bf 0},{\bf y}_0\rceil\cap \Phi_T=\emptyset$. This observation can be used to delimit the set $\Phi_T$.  
\end{rem}

\section{A matrix condition}\label{particular-case-11}

In this section we suppose that the function $T:\left(\R_+^*\right)^n \to \R^n$ satisfies the equality
\begin{equation}\label{inequality-Lipschitz-123}
T(\overline{\bf y})-T({\bf y})= \boldsymbol{\boldsymbol{\mathsf{M}}}(\overline{\bf y},{\bf y})(\overline{\bf y}-{\bf y}),\, \overline{\bf y}, {\bf y}\in \left(\R_+^*\right)^n,
\end{equation}
where $\boldsymbol{\mathsf{M}}:\left(\R_+^*\right)^n\times \left(\R_+^*\right)^n\to \mathcal{M}_n(\R_+)$ is a continuous, nonnegative matrix function and $\forall \overline{\bf y}, {\bf y}\in \left(\R_+^*\right)^n$ we have $\boldsymbol{\mathsf{M}}(\overline{\bf y},{\bf y})=\boldsymbol{\mathsf{M}}({\bf y},\overline{\bf y})$. We observe that $T$ is a continuous function. 

We present some monotonicity properties of $T$ and some monotonicity properties of the fixed point iteration sequences generated by $T$.

\begin{lem}\label{*-increasing-general} 
\begin{enumerate}[(i)]
\item $T$ is a isotone function with respect to $\leq$.
\item If $\forall {\bf y}, \overline{\bf y}\in  \left(\R_+^*\right)^n$ the matrix $\boldsymbol{\mathsf{M}}(\overline{\bf y},{\bf y})$ has on each row at least a positive element, then the following results hold.
\begin{enumerate}[(a)]
\item $T$ is strongly isotone\footnote{$T$ is strongly isotone if ${\bf y}<\overline{\bf y}\Rightarrow T({\bf y})< T(\overline{\bf y})$.}.
\item If ${\bf y}_0< T({\bf y}_0)$ (respectively $T({\bf y}_0)< {\bf y}_0$), then $\emph{Fit}_{T,{\bf y}_0}$ is strongly isotone\footnote{$({\bf y}_k)_{k\in \mathcal{N}}$ is strongly isotone (respectively strongly antitone) with respect to $\leq$ if for all $k$ we have ${\bf y}_k< {\bf y}_{k+1}$ (respectively ${\bf y}_{k+1}< {\bf y}_{k}$).} (respectively strongly antitone). 
 \end{enumerate}
\end{enumerate}
\end{lem}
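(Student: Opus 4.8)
The plan is to reduce all four assertions to a single elementary fact about nonnegative matrices: such a matrix maps the nonnegative orthant into itself, and if moreover each of its rows carries at least one positive entry, then it maps the \emph{strictly} positive orthant into itself. The symmetry and continuity of $\boldsymbol{\mathsf{M}}$ play no role in this lemma (they will be needed later); only the sign structure of $\boldsymbol{\mathsf{M}}(\overline{\bf y},{\bf y})$ matters.

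For $(i)$, I would take ${\bf y},\overline{\bf y}\in\left(\R_+^*\right)^n$ with ${\bf y}\leq\overline{\bf y}$ and set ${\bf v}:=\overline{\bf y}-{\bf y}$, so that every component $v_j\geq 0$. Since every entry of $\boldsymbol{\mathsf{M}}(\overline{\bf y},{\bf y})$ is nonnegative, the $i$-th component $\sum_{j}\boldsymbol{\mathsf{M}}(\overline{\bf y},{\bf y})_{ij}\,v_j$ of $\boldsymbol{\mathsf{M}}(\overline{\bf y},{\bf y}){\bf v}$ is a sum of nonnegative terms, hence nonnegative. By \eqref{inequality-Lipschitz-123} this quantity equals $(T(\overline{\bf y})-T({\bf y}))_i$, so $T({\bf y})\leq T(\overline{\bf y})$. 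For $(ii)(a)$ the same computation is run with strict inequalities: if ${\bf y}<\overline{\bf y}$ then every $v_j>0$, and fixing a row $i$ the hypothesis provides an index $j_0$ with $\boldsymbol{\mathsf{M}}(\overline{\bf y},{\bf y})_{ij_0}>0$, so the term $\boldsymbol{\mathsf{M}}(\overline{\bf y},{\bf y})_{ij_0}v_{j_0}$ is strictly positive while the remaining terms are $\geq 0$. Thus $(T(\overline{\bf y})-T({\bf y}))_i>0$ for each $i$, i.e. $T({\bf y})<T(\overline{\bf y})$.

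For $(ii)(b)$ I would argue by induction on $r$ using $(a)$, after first securing that the iterates stay in the domain. In the isotone case ${\bf y}_0<T({\bf y}_0)$ implies ${\bf y}_0\leq T({\bf y}_0)$, so ${\bf y}_0\in S_T^+$; by Lemma \ref{dominant-element-increasing} then $\text{Fit}_{T,{\bf y}_0}\subset\left(\R_+^*\right)^n$ (indeed ${\bf y}_0\in D_T^{\infty}$) and the sequence is isotone. The base step ${\bf y}_0<{\bf y}_1$ is the hypothesis; if ${\bf y}_r<{\bf y}_{r+1}$, applying $(a)$ to these two points of $\left(\R_+^*\right)^n$ gives $T({\bf y}_r)<T({\bf y}_{r+1})$, that is ${\bf y}_{r+1}<{\bf y}_{r+2}$. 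The antitone case $T({\bf y}_0)<{\bf y}_0$ is symmetric: then ${\bf y}_0\in S_T^-$, Lemma \ref{invariant-increasing-067} yields antitonicity, and the same induction (base step ${\bf y}_1<{\bf y}_0$, inductive step from $(a)$) upgrades it to strong antitonicity.

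There is essentially no deep obstacle here; the content is purely linear-algebraic. The only point demanding care is keeping each iterate inside $\left(\R_+^*\right)^n$ so that $(a)$ may be reinvoked at every step, and this is exactly what the invariance conclusions of Lemma \ref{dominant-element-increasing} (for the isotone case) and Lemma \ref{invariant-increasing-067} (for the antitone case) guarantee.
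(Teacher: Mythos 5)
Your proof is correct. The paper actually states Lemma \ref{*-increasing-general} without any proof at all (it is treated as an immediate consequence of the sign structure in \eqref{inequality-Lipschitz-123}), and your argument supplies exactly the details one would expect: nonnegative matrix times nonnegative vector for $(i)$, the row-wise positive entry against a strictly positive vector for $(ii)(a)$, and induction via $(a)$ for $(ii)(b)$, with the domain issue for the iterates correctly handled through the invariance statements of Lemmas \ref{dominant-element-increasing} and \ref{invariant-increasing-067}.
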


The spectral radius\footnote{In this paper, for a matrix ${M}\in \mathcal{M}_n(\R)$, we denote by $\rho({M})$ the spectral radius of $M$.} is a useful tool in the study of fixed points of $T$.

\begin{thm}\label{spectral-radius-strongly-12}Let be ${\bf y}_0\in D_T^{\infty}$ such that $\emph{Fit}_{T,{\bf y}_0}$is strongly monotone\footnote{A sequence is strongly monotone if it is strongly isotone or it is strongly antitone.}, convergent, and $\omega_T({\bf y}_0)\in \left(\R_+^*\right)^n$. Then,
\begin{enumerate}[(i)]
\item $\rho\left(\boldsymbol{\mathsf{M}}\left({\bf y}_0,\omega_T({\bf y}_0)\right)\right)< 1$.
\item $\rho\left(\boldsymbol{\mathsf{M}}\left(\omega_T({\bf y}_0),\omega_T({\bf y}_0)\right)\right)\leq 1$.
 \end{enumerate}
\end{thm}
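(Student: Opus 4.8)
The plan is to reduce both parts to a single elementary Perron--Frobenius estimate: if $\mathbf{A}\in\mathcal{M}_n(\R_+)$ is nonnegative and there is a strictly positive ${\bf v}\in\left(\R_+^*\right)^n$ with $\mathbf{A}{\bf v}<{\bf v}$ componentwise, then $\rho(\mathbf{A})<1$. This is immediate from the bound $\rho(\mathbf{A})\le \max_i (\mathbf{A}{\bf v})_i/v_i$, itself obtained by noting that $\text{diag}({\bf v})^{-1}\mathbf{A}\,\text{diag}({\bf v})$ is nonnegative with $i$-th row sum $(\mathbf{A}{\bf v})_i/v_i$, so that its spectral radius --- equal to $\rho(\mathbf{A})$ by similarity --- is at most the maximal row sum; I expect this fact to be among the results collected in the appendix. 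Throughout write ${\bf y}^*:=\omega_T({\bf y}_0)$ and ${\bf y}_{r+1}=T({\bf y}_r)$; since $\text{Fit}_{T,{\bf y}_0}$ is convergent with limit ${\bf y}^*\in\left(\R_+^*\right)^n$, Lemma~\ref{omega-limit-1}(iv) gives $T({\bf y}^*)={\bf y}^*$.

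For (i), put $\mathbf{A}:=\boldsymbol{\mathsf{M}}({\bf y}_0,{\bf y}^*)=\boldsymbol{\mathsf{M}}({\bf y}^*,{\bf y}_0)$, using the symmetry of $\boldsymbol{\mathsf{M}}$. Applying \eqref{inequality-Lipschitz-123} to the pair $({\bf y}^*,{\bf y}_0)$ together with $T({\bf y}^*)={\bf y}^*$ and $T({\bf y}_0)={\bf y}_1$ yields ${\bf y}^*-{\bf y}_1=\mathbf{A}({\bf y}^*-{\bf y}_0)$ when $\text{Fit}_{T,{\bf y}_0}$ is strongly isotone, and ${\bf y}_1-{\bf y}^*=\mathbf{A}({\bf y}_0-{\bf y}^*)$ when it is strongly antitone. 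In the isotone case set ${\bf v}:={\bf y}^*-{\bf y}_0$: strong isotonicity forces ${\bf y}_0<{\bf y}_1\le {\bf y}^*$, hence ${\bf v}\in\left(\R_+^*\right)^n$, while ${\bf v}-\mathbf{A}{\bf v}=({\bf y}^*-{\bf y}_0)-({\bf y}^*-{\bf y}_1)={\bf y}_1-{\bf y}_0>{\bf 0}$, i.e. $\mathbf{A}{\bf v}<{\bf v}$. The antitone case is identical with ${\bf v}:={\bf y}_0-{\bf y}^*>{\bf 0}$ and ${\bf v}-\mathbf{A}{\bf v}={\bf y}_0-{\bf y}_1>{\bf 0}$. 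The estimate above then gives $\rho(\mathbf{A})<1$.

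For (ii), I would apply (i) along the tail of the sequence. For each $r$ the tail $\text{Fit}_{T,{\bf y}_r}$ is exactly the iteration started at ${\bf y}_r\in D_T^{\infty}$; it is again strongly monotone and convergent with the same limit ${\bf y}^*\in\left(\R_+^*\right)^n$, so (i) applies and gives $\rho\bigl(\boldsymbol{\mathsf{M}}({\bf y}_r,{\bf y}^*)\bigr)<1$ for every $r$. Since ${\bf y}_r\to {\bf y}^*$ and $\boldsymbol{\mathsf{M}}$ is continuous, $\boldsymbol{\mathsf{M}}({\bf y}_r,{\bf y}^*)\to\boldsymbol{\mathsf{M}}({\bf y}^*,{\bf y}^*)$, and because the spectral radius is a continuous function of the matrix entries we conclude $\rho\bigl(\boldsymbol{\mathsf{M}}({\bf y}^*,{\bf y}^*)\bigr)=\lim_r\rho\bigl(\boldsymbol{\mathsf{M}}({\bf y}_r,{\bf y}^*)\bigr)\le 1$.

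The only point requiring real care, and the crux of the whole argument, is the preservation of \emph{strict} componentwise inequalities: both the positivity of ${\bf v}$ and the strict relation $\mathbf{A}{\bf v}<{\bf v}$ depend essentially on the hypothesis that $\text{Fit}_{T,{\bf y}_0}$ is strongly monotone rather than merely monotone, and this is precisely what sharpens the conclusion from $\rho(\mathbf{A})\le 1$ to $\rho(\mathbf{A})<1$ in part (i). In part (ii) this strictness is unavoidably lost upon passing to the limit, which explains why only the non-strict bound $\rho\le 1$ can be asserted there.
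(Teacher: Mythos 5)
Your proof is correct and follows essentially the same route as the paper: the identity from \eqref{inequality-Lipschitz-123} applied to the pair $(\omega_T({\bf y}_0),{\bf y}_r)$ gives a positive vector ${\bf v}$ with $\boldsymbol{\mathsf{M}}{\bf v}<{\bf v}$, hence $\rho<1$ (the paper cites Corollary 8.1.29 of Horn--Johnson for this step, whereas you rederive it via the diagonal similarity and row sums), and part (ii) follows by letting $r\to\infty$ and invoking continuity of the spectral radius. The only cosmetic difference is that you obtain the inequality along the tail by reapplying (i) to $\text{Fit}_{T,{\bf y}_r}$, while the paper proves $\rho\left(\boldsymbol{\mathsf{M}}({\bf y}_r,\omega_T({\bf y}_0))\right)<1$ for all $r$ directly in its proof of (i).
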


\begin{proof} We consider the case when $\text{Fit}_{T,{\bf y}_0}$ is strongly isotone. 
For $r\in \mathbb{N}$ we have 
$$\omega({\bf y}_0)-{\bf y}_{r+1}=T(\omega({\bf y}_0))-T({\bf y}_k)=\boldsymbol{\mathsf{M}}(\omega({\bf y}_0),{\bf y}_r)(\omega({\bf y}_0)-{\bf y}_r).$$
Because $\omega({\bf y}_0)-{\bf y}_{r+1}< \omega({\bf y}_0)-{\bf y}_{r}$ we deduce that $\boldsymbol{\mathsf{M}}(\omega({\bf y}_0),{\bf y}_r)(\omega({\bf y}_0)-{\bf y}_r)<\omega({\bf y}_0)-{\bf y}_{r}.$ The vector $\omega({\bf y}_0)-{\bf y}_{r}$ is positive. From Corollary 8.1.29, \cite{horn}, we obtain that  
\begin{equation}\label{inequality-k-omega}
\rho\left(\boldsymbol{\mathsf{M}}\left({\bf y}_r,\omega_T({\bf y}_0)\right)\right)< 1.
\end{equation} 
When $\text{Fit}_{T,{\bf y}_0}$ is strongly antitone the proof is analogous and \eqref{inequality-k-omega} remains valid.

$(ii)$ From \eqref{inequality-k-omega} and by using the continuity of the spectral radius, see \cite{horn}, 5.6.P19, we obtain the announced result.
\end{proof}
A consequence of the above result gives us an upper bound of $\rho\left(\boldsymbol{\mathsf{M}}\left({\bf y}_T^{\Box},{\bf y}_T^{\Box}\right)\right)$.

\begin{thm}\label{inegalitate-yBox-10}
If $\Phi_T\neq \emptyset$, $T$ is bounded from above, and for all $\overline{\bf y}, {\bf y}\in \left(\R_+^*\right)^n$ the matrix $\boldsymbol{\mathsf{M}}(\overline{\bf y},{\bf y})$ has on each row at least a positive element, then  $\rho\left(\boldsymbol{\mathsf{M}}\left({\bf y}_T^{\Box},{\bf y}_T^{\Box}\right)\right)\leq 1$.
\end{thm}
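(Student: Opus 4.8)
The plan is to deduce the bound directly from Theorem \ref{spectral-radius-strongly-12}$(ii)$ by feeding it a single, well-chosen starting point whose $\omega$-limit is the dominant fixed point. Since $T$ is bounded from above and $\Phi_T\neq\emptyset$, Theorem \ref{x-Bix-bounded-112} guarantees that $S_T^+\neq\emptyset$ is bounded from above, so the dominant fixed point ${\bf y}_T^{\Box}\in\Phi_T\subset\left(\R_+^*\right)^n$ is well defined. Fix an upper bound ${\bf k}$ of $T$; since any fixed point satisfies ${\bf y}^*=T({\bf y}^*)\leq{\bf k}$ and lies in $\left(\R_+^*\right)^n$, we automatically have ${\bf k}\in\left(\R_+^*\right)^n$, which is exactly the positivity required to invoke Theorem \ref{x-Bix-bounded-112}. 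Also recall that $T$ is continuous and, by Lemma \ref{*-increasing-general}$(i)$, isotone, so the Section \ref{T-increasing-general} machinery applies.

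Next I would take ${\bf y}_0:={\bf k}+{\bf e}$, where ${\bf e}$ is the vector all of whose entries equal $1$. Then $T({\bf y}_0)\leq{\bf k}<{\bf y}_0$ in the strict componentwise sense, so $T({\bf y}_0)<{\bf y}_0$; in particular ${\bf y}_0\in S_T^-$. By Lemma \ref{*-increasing-general}$(ii)(b)$ — whose hypothesis that each row of $\boldsymbol{\mathsf{M}}(\overline{\bf y},{\bf y})$ carries a positive entry is precisely the one assumed in the present statement — the sequence $\emph{Fit}_{T,{\bf y}_0}$ is strongly antitone. Moreover ${\bf y}_0\in\lfloor{\bf k},\infty\lceil$, so Theorem \ref{x-Bix-bounded-112}$(ii)$ shows that $\emph{Fit}_{T,{\bf y}_0}$ is convergent with $\omega_T({\bf y}_0)={\bf y}_T^{\Box}$; in particular ${\bf y}_0\in D_T^{\infty}$ and the limit ${\bf y}_T^{\Box}$ lies in $\left(\R_+^*\right)^n$.

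At this stage every hypothesis of Theorem \ref{spectral-radius-strongly-12} is met for this ${\bf y}_0$: the iteration is strongly monotone, convergent, and has a positive limit. Part $(ii)$ of that theorem then yields $\rho\!\left(\boldsymbol{\mathsf{M}}(\omega_T({\bf y}_0),\omega_T({\bf y}_0))\right)\leq 1$, which is exactly $\rho\!\left(\boldsymbol{\mathsf{M}}({\bf y}_T^{\Box},{\bf y}_T^{\Box})\right)\leq 1$, the claimed inequality.

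The one delicate point — and the part I expect to be the crux — is securing the \emph{strict} (strong) monotonicity of the iteration, since Theorem \ref{spectral-radius-strongly-12} is stated only for strongly monotone sequences. This is exactly why I would start strictly above ${\bf k}$ rather than at ${\bf k}$ itself: the strict inequality $T({\bf y}_0)\leq{\bf k}<{\bf y}_0$ triggers Lemma \ref{*-increasing-general}$(ii)(b)$, and the assumed row-positivity of $\boldsymbol{\mathsf{M}}$ then propagates strictness to every step of the sequence. Everything else is bookkeeping with results already established; the particular choice ${\bf y}_0={\bf k}+{\bf e}$ is merely convenient, as any strongly antitone approach to ${\bf y}_T^{\Box}$ would serve equally well.
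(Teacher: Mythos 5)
Your proposal is correct and follows essentially the same route as the paper's own proof: start from a point strictly above an upper bound ${\bf k}$ of $T$, use Lemma \ref{*-increasing-general}$(ii)(b)$ to get a strongly antitone iteration, Theorem \ref{x-Bix-bounded-112}$(ii)$ to identify the limit as ${\bf y}_T^{\Box}$, and Theorem \ref{spectral-radius-strongly-12}$(ii)$ to conclude. Your explicit check that ${\bf k}$ may be taken in $\left(\R_+^*\right)^n$ is a small point the paper leaves implicit, but the argument is the same.
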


\begin{proof} Let ${\bf k}\in \R^n$ be such that $T({\bf y})\leq {\bf k}$ for all ${\bf y}\in  \left(\R_+^*\right)^n$.
Let be the vector ${\bf y}_0$ such that ${\bf k}< {\bf y}_0$. We have $T({\bf y}_0)\leq {\bf k}< {\bf y}_0$.
$\text{Fit}_{T,{\bf y}_0}$ is convergent, strongly antitone (Lemma \ref{*-increasing-general}), and, from Theorem \ref{x-Bix-bounded-112}, its limit is the dominant fixed point ${\bf y}_T^{\Box}$. The inequality from the statement is the consequence of Theorem \ref{spectral-radius-strongly-12}.
\end{proof}

We present some results about the spectral radius of the matrix $\boldsymbol{\mathsf{M}}\left({\bf y},\overline{\bf y}\right)$ when ${\bf y}$ and $\overline{\bf y}$ are different fixed points of $T$.

\begin{thm}\label{q-ineq-sectral-7650-v2} Let be ${\bf y},\overline{\bf y}\in \Phi_T$ such that ${\bf y}\neq \overline{\bf y}$.
\begin{enumerate}[(i)]
\item $\rho\left(\boldsymbol{\mathsf{M}}(\overline{\bf y},{\bf y})\right)\geq 1$.
\item If ${\bf y}\in \Phi_T$ is not an isolated fixed point of $T$, then $\rho\left(\boldsymbol{\mathsf{M}}({\bf y},{\bf y})\right)\geq 1$.
\item If ${\bf y}< \overline{\bf y}$, then $\rho\left(\boldsymbol{\mathsf{M}}(\overline{\bf y},{\bf y})\right)= 1$.
\item If ${\bf y}\lneq \overline{\bf y}$ and $\boldsymbol{\mathsf{M}}(\overline{\bf y},{\bf y})$ is an irreducible matrix, then ${\bf y}< \overline{\bf y}$ and $\rho\left(\boldsymbol{\mathsf{M}}(\overline{\bf y},{\bf y})\right)= 1$.
\end{enumerate}
\end{thm}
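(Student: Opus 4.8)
The whole plan hinges on one elementary remark. If ${\bf y},\overline{\bf y}\in\Phi_T$, then $T({\bf y})={\bf y}$ and $T(\overline{\bf y})=\overline{\bf y}$, so feeding these into \eqref{inequality-Lipschitz-123} gives
\begin{equation*}
\overline{\bf y}-{\bf y}=T(\overline{\bf y})-T({\bf y})=\boldsymbol{\mathsf{M}}(\overline{\bf y},{\bf y})(\overline{\bf y}-{\bf y}).
\end{equation*}
Setting ${\bf v}:=\overline{\bf y}-{\bf y}$ and $A:=\boldsymbol{\mathsf{M}}(\overline{\bf y},{\bf y})$, this says that ${\bf v}$ is a nonzero eigenvector of the nonnegative matrix $A$ for the eigenvalue $1$ (nonzero because ${\bf y}\neq\overline{\bf y}$). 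I would read all four claims off from this single eigenrelation. In particular, $(i)$ is then immediate: $1$ is an eigenvalue of $A$, so $\rho(A)\geq |1|=1$.

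For $(ii)$, the failure of ${\bf y}$ to be isolated furnishes a sequence $({\bf y}_m)_{m}\subset\Phi_T$ with ${\bf y}_m\neq{\bf y}$ and ${\bf y}_m\to{\bf y}$. Applying $(i)$ to each pair $({\bf y}_m,{\bf y})$ gives $\rho(\boldsymbol{\mathsf{M}}({\bf y}_m,{\bf y}))\geq 1$, and I would let $m\to\infty$: by continuity of $\boldsymbol{\mathsf{M}}$ we have $\boldsymbol{\mathsf{M}}({\bf y}_m,{\bf y})\to\boldsymbol{\mathsf{M}}({\bf y},{\bf y})$, and by continuity of the spectral radius (\cite{horn}, 5.6.P19) the inequality passes to the limit, yielding $\rho(\boldsymbol{\mathsf{M}}({\bf y},{\bf y}))\geq 1$.

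For $(iii)$, the hypothesis ${\bf y}<\overline{\bf y}$ means ${\bf v}>{\bf 0}$ componentwise, so I can upgrade $(i)$ to an equality. Indeed, for every $\varepsilon>0$ the eigenrelation gives $A{\bf v}={\bf v}<(1+\varepsilon){\bf v}$ (strict, since ${\bf v}>{\bf 0}$), whence Corollary 8.1.29 of \cite{horn} — the same bound already used in Theorem \ref{spectral-radius-strongly-12} — gives $\rho(A)<1+\varepsilon$; letting $\varepsilon\to 0^+$ yields $\rho(A)\leq 1$, and together with $(i)$ this forces $\rho(A)=1$.

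Finally, $(iv)$ differs from $(iii)$ only in that I start with ${\bf 0}\lneq{\bf v}$ rather than ${\bf v}>{\bf 0}$, so the real work is to promote this to strict positivity; this is exactly where irreducibility enters. The eigenrelation gives $(I+A){\bf v}=2{\bf v}$, hence $(I+A)^{n-1}{\bf v}=2^{n-1}{\bf v}$, and since an irreducible nonnegative matrix $A$ satisfies $(I+A)^{n-1}>{\bf 0}$ entrywise (\cite{horn}), the left-hand side is a strictly positive vector; therefore ${\bf v}>{\bf 0}$, i.e. ${\bf y}<\overline{\bf y}$, and then $(iii)$ applies verbatim to give $\rho(A)=1$. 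Alternatively, one may invoke the Perron--Frobenius theorem directly: for an irreducible nonnegative matrix the only eigenvalue possessing a nonnegative eigenvector is $\rho(A)$, and that eigenvector is positive, so $A{\bf v}={\bf v}$ with ${\bf 0}\lneq{\bf v}$ simultaneously forces $\rho(A)=1$ and ${\bf v}>{\bf 0}$. I expect this strict-positivity upgrade in $(iv)$ to be the only non-routine point, since nonnegativity of $A$ by itself does not rule out ${\bf v}$ having zero components and the irreducibility is genuinely essential there.
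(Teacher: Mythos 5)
Your proof is correct and follows essentially the same route as the paper: the eigenrelation $\boldsymbol{\mathsf{M}}(\overline{\bf y},{\bf y})(\overline{\bf y}-{\bf y})=\overline{\bf y}-{\bf y}$ for parts $(i)$ and $(iii)$, continuity of the spectral radius for $(ii)$, and the positivity of a nonnegative eigenvector of an irreducible nonnegative matrix for $(iv)$. The only cosmetic differences are that the paper invokes Corollary 8.1.30 of Horn--Johnson directly in $(iii)$ where you use 8.1.29 plus an $\varepsilon$-argument, and cites 8.4.P15 in $(iv)$ where you spell out the $(I+A)^{n-1}>0$ argument.
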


\begin{proof}
$(i)$ From \eqref{inequality-Lipschitz-123} we have
$\overline{\bf y}-{\bf y}= \boldsymbol{\mathsf{M}}(\overline{\bf y},{\bf y})\left(\overline{\bf y}-{\bf y}\right).$ We deduce that 1 is an eigenvalue of  $\boldsymbol{\mathsf{M}}(\overline{\bf y},{\bf y})$. By definition of the spectral radius we obtain the result. 

$(ii)$ There exists the sequence $({\bf y}_r)_{r\in \mathbb{N}}$ of fixed points of $T$ such that ${\bf y}_r\neq {\bf y}$, $r\in \mathbb{N}$, and ${\bf y}_r\to {\bf y}$. Using $\rho\left(\boldsymbol{\mathsf{M}}({\bf y}_r,{\bf y})\right)\geq 1$ and the continuity of the spectral radius we deduce that $\rho\left(\boldsymbol{\mathsf{M}}({\bf y},{\bf y})\right)\geq 1$.

$(iii)$ By using Corollary 8.1.30 from \cite{horn} and the fact that $\overline{\bf y}-{\bf y}$ is a positive vector we deduce that $\rho\left(\boldsymbol{\mathsf{M}}(\overline{\bf y},{\bf y})\right)= 1$.

$(iv)$ $\overline{\bf y}-{\bf y}$ is a nonnegative eigenvector of the nonnegative irreducible matrix $\boldsymbol{\mathsf{M}}(\overline{\bf y},{\bf y})$. Consequently, $\overline{\bf y}-{\bf y}$ is a positive eigenvector (see 8.4.P15 from \cite{horn}). We apply $(iii)$.
\end{proof}

In what follows we study the situation in which the matrix function $\boldsymbol{\mathsf{M}}$ is strictly antitone and irreducible.

\begin{thm}\label{irreducible-particular-M-76}
Suppose that the matrix function $\boldsymbol{\mathsf{M}}$ is strictly antitone\footnote{$\boldsymbol{\mathsf{M}}$ is strictly antitone if $({\bf y}, \overline{\bf y})\lneq ({\bf z}, \overline{\bf z}) \Rightarrow M({\bf z}, \overline{\bf z})\lneq M({\bf y}, \overline{\bf y})$.} with respect to $\leq$ and that for all $\overline{\bf y},{\bf y}\in \left(\R_+^*\right)^n$ the matrix $\boldsymbol{\mathsf{M}}(\overline{\bf y},{\bf y})$ is irreducible. 
\begin{enumerate}[(i)]
\item There is no chain with three elements in $\Phi_T$.
\item If $T$ is bounded from above and $\overline{\bf y}, {\bf y}\in \Phi_T$ such that $\overline{\bf y}$, ${\bf y}$, and ${\bf y}_T^{\Box}$ are different two by two, then $\overline{\bf y}$ and ${\bf y}$ are not comparable (with respect to $\leq$). 
\item Suppose that $T$ is bounded from above and $\Phi_T$ has at least two distinct elements.
\begin{enumerate}
\item  $\rho\left(\boldsymbol{\mathsf{M}}({\bf y}_{T}^{\Box},{\bf y}_{T}^{\Box})\right)< 1$ and for ${\bf y}\in \Phi_{T}\backslash\{{\bf y}_{T}^{\Box}\}$ we have $\rho\left(\boldsymbol{\mathsf{M}}({\bf y},{\bf y})\right)> 1$ .
\item If $T$ is a concave function and\footnote{$\rfloor{\bf y},\infty\lceil=\{\overline{\bf y}|{\bf y}< \overline{\bf y}\}$.} ${\bf y}_0\in\bigcup_{{\bf y}\in \Phi_T\backslash\{{\bf y}_T^{\Box}\}}\rfloor{\bf y},\infty\lceil$, then $\emph{Fit}_{T,{\bf y}_0}$ is convergent and its limit is $\omega_T({\bf y}_0)={\bf y}_T^{\Box}$. 
\end{enumerate}
\end{enumerate} 
\end{thm}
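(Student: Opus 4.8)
The plan is to reduce everything to two facts: Theorem~\ref{q-ineq-sectral-7650-v2}, which pins the spectral radius of $\boldsymbol{\mathsf{M}}$ evaluated at strictly ordered fixed points to the value $1$, and the strict Perron--Frobenius monotonicity of the spectral radius for irreducible nonnegative matrices, namely that $0\leq A\lneq B$ with $A,B$ irreducible forces $\rho(A)<\rho(B)$ (a standard consequence of Perron--Frobenius theory, available in \cite{horn}). A preliminary observation I would record is that, since every matrix $\boldsymbol{\mathsf{M}}(\overline{\bf y},{\bf y})$ is irreducible, Theorem~\ref{q-ineq-sectral-7650-v2}$(iv)$ upgrades any relation ${\bf y}\lneq\overline{\bf y}$ between distinct comparable fixed points to the strict relation ${\bf y}<\overline{\bf y}$; thus comparability of distinct fixed points always means strict comparability, so that Theorem~\ref{q-ineq-sectral-7650-v2}$(iii)$ applies and gives $\rho(\boldsymbol{\mathsf{M}}(\overline{\bf y},{\bf y}))=1$.

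For $(i)$ I would argue by contradiction. Suppose $\Phi_T$ contains a chain, which by the observation above we may write as ${\bf y}_1<{\bf y}_2<{\bf y}_3$. Theorem~\ref{q-ineq-sectral-7650-v2}$(iii)$ gives $\rho(\boldsymbol{\mathsf{M}}({\bf y}_3,{\bf y}_1))=1$ and $\rho(\boldsymbol{\mathsf{M}}({\bf y}_3,{\bf y}_2))=1$. On the other hand, ${\bf y}_1<{\bf y}_2$ yields the pair inequality $({\bf y}_3,{\bf y}_1)\lneq({\bf y}_3,{\bf y}_2)$, so strict antitonicity of $\boldsymbol{\mathsf{M}}$ gives $\boldsymbol{\mathsf{M}}({\bf y}_3,{\bf y}_2)\lneq\boldsymbol{\mathsf{M}}({\bf y}_3,{\bf y}_1)$. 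Since both matrices are irreducible, strict monotonicity of the Perron root forces $1=\rho(\boldsymbol{\mathsf{M}}({\bf y}_3,{\bf y}_2))<\rho(\boldsymbol{\mathsf{M}}({\bf y}_3,{\bf y}_1))=1$, a contradiction. Part $(ii)$ is then immediate: boundedness from above together with $\Phi_T\neq\emptyset$ makes ${\bf y}_T^{\Box}$ dominate every fixed point (Theorem~\ref{x-Bix-bounded-112}), so ${\bf y},\overline{\bf y}<{\bf y}_T^{\Box}$; were ${\bf y}$ and $\overline{\bf y}$ comparable, say ${\bf y}<\overline{\bf y}$, then ${\bf y}<\overline{\bf y}<{\bf y}_T^{\Box}$ would be a three-element chain in $\Phi_T$, contradicting $(i)$.

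For $(iii)(a)$ I would fix any ${\bf y}\in\Phi_T\backslash\{{\bf y}_T^{\Box}\}$; since it is dominated by ${\bf y}_T^{\Box}$ it satisfies ${\bf y}<{\bf y}_T^{\Box}$, whence $\rho(\boldsymbol{\mathsf{M}}({\bf y}_T^{\Box},{\bf y}))=1$ by Theorem~\ref{q-ineq-sectral-7650-v2}$(iii)$. Two applications of strict antitonicity followed by strict Perron monotonicity then close the estimates: the pair inequality $({\bf y}_T^{\Box},{\bf y})\lneq({\bf y}_T^{\Box},{\bf y}_T^{\Box})$ gives $\boldsymbol{\mathsf{M}}({\bf y}_T^{\Box},{\bf y}_T^{\Box})\lneq\boldsymbol{\mathsf{M}}({\bf y}_T^{\Box},{\bf y})$, hence $\rho(\boldsymbol{\mathsf{M}}({\bf y}_T^{\Box},{\bf y}_T^{\Box}))<1$; and $({\bf y},{\bf y})\lneq({\bf y}_T^{\Box},{\bf y})$ gives $\boldsymbol{\mathsf{M}}({\bf y}_T^{\Box},{\bf y})\lneq\boldsymbol{\mathsf{M}}({\bf y},{\bf y})$, hence $\rho(\boldsymbol{\mathsf{M}}({\bf y},{\bf y}))>1$. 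Finally, $(iii)(b)$ needs no new work: $T$ is bounded from above and concave, $\Phi_T$ has at least two elements, and by $(i)$ there is no three-element chain in $\Phi_T$, so the hypotheses of Theorem~\ref{domain-of-attraction} are met and its conclusion is exactly the claimed convergence of $\emph{Fit}_{T,{\bf y}_0}$ to ${\bf y}_T^{\Box}$.

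The step I expect to require the most care is the bookkeeping of the pair order in $(i)$ and $(iii)(a)$: one must keep straight that enlarging the argument pair shrinks the matrix (the antitone direction), and that it is precisely the \emph{strict} version of Perron monotonicity for irreducible matrices that converts the pinned equalities $\rho=1$ into the strict inequalities needed both for the contradiction and for the bounds. Everything else is a direct appeal to the already-established Theorems~\ref{q-ineq-sectral-7650-v2}, \ref{x-Bix-bounded-112} and \ref{domain-of-attraction}.
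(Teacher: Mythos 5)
Your proposal is correct and follows essentially the same route as the paper: pin the spectral radius to $1$ at comparable distinct fixed points via Theorem~\ref{q-ineq-sectral-7650-v2}, then use strict antitonicity of $\boldsymbol{\mathsf{M}}$ together with strict Perron--Frobenius monotonicity of $\rho$ for irreducible nonnegative matrices to force $1<1$ in $(i)$ and the two strict bounds in $(iii)(a)$, with $(ii)$ and $(iii)(b)$ reduced to $(i)$ plus Theorems~\ref{x-Bix-bounded-112} and~\ref{domain-of-attraction}. The only cosmetic difference is that the paper derives the contradiction in $(i)$ from a chain of three pair-inequalities, whereas you use two; this changes nothing of substance.
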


\begin{proof}
$(i)$ Suppose that ${\bf y}\lneq \overline{\bf y}\lneq \overline{\overline{\bf y}}$ is a chain of fixed points of $T$. We have $({\bf y},\overline{\bf y})\lneq ({\bf y},\overline{\overline{\bf y}})\lneq (\overline{\bf y},\overline{\overline{\bf y}})$ 
and $\boldsymbol{\mathsf{M}}(\overline{\bf y},\overline{\overline{\bf y}})\lneq \boldsymbol{\mathsf{M}}({\bf y},\overline{\overline{\bf y}})\lneq \boldsymbol{\mathsf{M}}({\bf y},\overline{\bf y})$. These matrices are nonnegative and irreducible. Consequently, $\rho(\boldsymbol{\mathsf{M}}(\overline{\bf y},\overline{\overline{\bf y}}))<\rho(\boldsymbol{\mathsf{M}}({\bf y},\overline{\overline{\bf y}}))<\rho(\boldsymbol{\mathsf{M}}({\bf y},\overline{\bf y}))$, see \cite{horn}, 8.4.P15. From Theorem \ref{q-ineq-sectral-7650-v2} we obtain a contradiction.

$(ii)$ From Theorem \ref{x-Bix-bounded-112} we have that $\overline{\bf y}\lneq {\bf y}_T^{\Box}$ and ${\bf y}\lneq {\bf y}_T^{\Box}$. From $(i)$ we have that $\{\overline{\bf y}, {\bf y}, {\bf y}_T^{\Box}\}$ is not a chain and we deduce that $\overline{\bf y}$ and ${\bf y}$ are not comparable.

$(iii)-(a)$ Let be ${\bf y}_{T}^{\Box},{\bf y}\in \Phi_{T}$ with ${\bf y}\lneq {\bf y}_{T}^{\Box}$. We have $\boldsymbol{\mathsf{M}}({\bf y}_{T}^{\Box},{\bf y}_{T}^{\Box})\lneq \boldsymbol{\mathsf{M}}({\bf y}_T^{\Box},{\bf y})\lneq \boldsymbol{\mathsf{M}}({\bf y},{\bf y})$. Because these matrices are nonnegative and irreducible, using Theorem \ref{q-ineq-sectral-7650-v2},  we obtain that  $\rho\left(\boldsymbol{\mathsf{M}}({\bf y}_{T}^{\Box},{\bf y}_{T}^{\Box})\right)< 1<\rho\left(\boldsymbol{\mathsf{M}}({\bf y},{\bf y})\right)$.

For $(iii)-(b)$ we use $(i)$ and we apply Theorem \ref{domain-of-attraction}.
\end{proof}

\section{The isotone electric systems}\label{increasing-power-563}

In this section we study the fixed points of the function $T_{{\bf k},M}:\left(\R_+^*\right)^n\to \R^n$ given by $T_{{\bf k},M}({\bf y})={\bf k}-M \frac{1}{\bf y}$, $M\in \mathcal{M}_n(\R)$ is a nonnegative matrix and ${\bf k}\in \R^n$. A fixed point of $T_{{\bf k},M}$ is a positive solution of the isotone electric system \eqref{ecuatia-fix-increasing-321}.
The function $T_{{\bf k},M}$ has the following properties that are easy to notice.
\begin{lem}\label{bounded-TkM}
\begin{enumerate}[(i)]
\item $T_{{\bf k},M}$ is bounded from above by ${\bf k}$.
 \item If $P$ is a permutation matrix\footnote{A permutation matrix is a square matrix that has exactly one entry of 1 in each row and each column and 0's elsewhere.} and ${\bf y}\in \left(\R_+^*\right)^{n}$, then $T_{P^T{\bf k},P^TMP}(P^T{\bf y})=P^T T_{{\bf k},M}({\bf y})$.
\end{enumerate}
\end{lem}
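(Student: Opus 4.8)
The plan is to verify each of the two claims by direct computation, since both are elementary consequences of the definition $T_{{\bf k},M}({\bf y})={\bf k}-M\frac{1}{\bf y}$ together with the nonnegativity of $M$. For part $(i)$, I would start from the observation that for any ${\bf y}\in\left(\R_+^*\right)^n$ the vector $\frac{1}{\bf y}$ has all components strictly positive, so $\frac{1}{\bf y}\in\left(\R_+^*\right)^n$. Since $M$ is a nonnegative matrix, the product $M\frac{1}{\bf y}$ is a nonnegative vector, i.e. ${\bf 0}\leq M\frac{1}{\bf y}$. Subtracting this from ${\bf k}$ gives $T_{{\bf k},M}({\bf y})={\bf k}-M\frac{1}{\bf y}\leq {\bf k}$ componentwise, which is exactly the assertion that $T_{{\bf k},M}$ is bounded from above by ${\bf k}$.

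For part $(ii)$, the key algebraic fact I would exploit is that a permutation matrix $P$ is orthogonal, so $P^TP=I$ and hence $PP^T=I$ as well, and that the Hadamard reciprocal commutes with the permutation action in the sense that $\frac{1}{P^T{\bf y}}=P^T\frac{1}{\bf y}$. This last identity holds because $P^T{\bf y}$ is just a rearrangement of the entries of ${\bf y}$, and taking reciprocals entrywise is unaffected by the order in which the entries are listed; this is the one step that requires a little care to state precisely. Granting it, I would compute directly:
\begin{equation*}
T_{P^T{\bf k},\,P^TMP}(P^T{\bf y})=P^T{\bf k}-\left(P^TMP\right)\frac{1}{P^T{\bf y}}=P^T{\bf k}-P^TMP\,P^T\frac{1}{\bf y}.
\end{equation*}
Using $PP^T=I$ to collapse the middle factor $PP^T$ yields $P^T{\bf k}-P^TM\frac{1}{\bf y}=P^T\left({\bf k}-M\frac{1}{\bf y}\right)=P^TT_{{\bf k},M}({\bf y})$, which is the claimed identity.

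I do not expect any genuine obstacle here, since the statement is flagged in the paper as ``easy to notice''; the only point deserving explicit justification is the commutation relation $\frac{1}{P^T{\bf y}}=P^T\frac{1}{\bf y}$, which one can either verify componentwise (the $i$-th entry of $P^T{\bf y}$ equals $y_{\sigma(i)}$ for the permutation $\sigma$ induced by $P$, so its reciprocal is $\frac{1}{y_{\sigma(i)}}$, matching the $i$-th entry of $P^T\frac{1}{\bf y}$) or record as a remark that entrywise operations are equivariant under permutation of coordinates. Everything else is linear bookkeeping with the orthogonality of $P$.
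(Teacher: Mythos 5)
Your proof is correct and follows exactly the direct computation the paper has in mind when it labels these properties as ``easy to notice'' (the paper itself omits the proof entirely). Both the nonnegativity argument for $(i)$ and the use of $PP^T=I$ together with the permutation-equivariance of the entrywise reciprocal for $(ii)$ are the standard and intended justifications.
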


For ${\bf y}, \overline{\bf y}\in \left(\R_+^*\right)^{n}$ we have the equality
\begin{equation}\label{M-diag-egalitate-11}
T_{{\bf k},M}(\overline{\bf y})-T_{{\bf k},M}({\bf y})=M\text{diag}\frac{1}{{\bf y}\circ\overline{\bf y}}(\overline{\bf y}-{\bf y}).
\end{equation}
We introduce $\boldsymbol{\mathsf{M}}(\cdot,\cdot):\left(\R_+^*\right)^{n}\times \left(\R_+^*\right)^{n}\to \mathcal{M}_n(\R_+)$ given by 
$\boldsymbol{\mathsf{M}}(\overline{\bf y},{\bf y})=M\text{diag}\frac{1}{{\bf y}\circ\overline{\bf y}}$.

\begin{lem}\label{props-M-09} The above matrix function has the following properties:
\begin{enumerate}[(i)]
\item $\boldsymbol{\mathsf{M}}({\bf y},\overline{\bf y})=\boldsymbol{\mathsf{M}}(\overline{\bf y},{\bf y})$, $\forall {\bf y},\overline{\bf y}\in \left(\R_+^*\right)^{n}$.
\item The matrix function $\boldsymbol{\mathsf{M}}(\cdot,\cdot)$ is antitone with respect to $\leq$.
\item If $M$ is reducible, then $\forall {\bf y},\overline{\bf y}\in \left(\R_+^*\right)^{n}$ the matrix $\boldsymbol{\mathsf{M}}(\overline{\bf y},{\bf y})$ is reducible.
\item If $M$ is irreducible, then $\forall {\bf y},\overline{\bf y}\in \left(\R_+^*\right)^{n}$ the matrix $\boldsymbol{\mathsf{M}}(\overline{\bf y},{\bf y})$ is irreducible and the matrix function $\boldsymbol{\mathsf{M}}(\cdot,\cdot)$ is strictly antitone.
\item If $M$ has on each row at least a positive element, then $\forall {\bf y},\overline{\bf y}\in \left(\R_+^*\right)^{n}$ the matrix $\boldsymbol{\mathsf{M}}(\overline{\bf y},{\bf y})$ has on each row at least a positive element.
\end{enumerate}
\end{lem}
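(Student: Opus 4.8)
The plan is to exploit a single structural observation and then dispatch the five items by it. Write $D(\overline{\bf y},{\bf y})=\text{diag}\frac{1}{{\bf y}\circ\overline{\bf y}}$, so that $\boldsymbol{\mathsf{M}}(\overline{\bf y},{\bf y})=M\,D(\overline{\bf y},{\bf y})$ is obtained from $M$ by multiplying its $j$-th column by the strictly positive scalar $\frac{1}{y_j\overline{y}_j}$. Two consequences drive everything. First, a strictly positive factor neither creates nor destroys a zero, so $\boldsymbol{\mathsf{M}}(\overline{\bf y},{\bf y})$ has exactly the same zero/nonzero pattern as $M$. Second, for fixed $M_{ij}\geq 0$ the entry $M_{ij}\frac{1}{y_j\overline{y}_j}$ is a nonincreasing function of $y_j$ and $\overline{y}_j$.

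For $(i)$ I would simply note that ${\bf y}\circ\overline{\bf y}=\overline{\bf y}\circ{\bf y}$, so the defining formula is invariant under swapping the two arguments. For $(ii)$, antitonicity means $({\bf y},\overline{\bf y})\leq ({\bf z},\overline{\bf z})\Rightarrow \boldsymbol{\mathsf{M}}({\bf z},\overline{\bf z})\leq \boldsymbol{\mathsf{M}}({\bf y},\overline{\bf y})$; taking $y_j\leq z_j$ and $\overline{y}_j\leq \overline{z}_j$ for every $j$ gives $\frac{1}{z_j\overline{z}_j}\leq \frac{1}{y_j\overline{y}_j}$, and multiplying by $M_{ij}\geq 0$ yields $\boldsymbol{\mathsf{M}}({\bf z},\overline{\bf z})_{ij}\leq \boldsymbol{\mathsf{M}}({\bf y},\overline{\bf y})_{ij}$ for all $i,j$, which is the asserted inequality.

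Items $(iii)$ and $(v)$ reduce to pattern preservation. Reducibility is a property of the zero pattern alone (equivalently, of the associated directed graph), so the identical pattern forces $\boldsymbol{\mathsf{M}}(\overline{\bf y},{\bf y})$ to be reducible whenever $M$ is; concretely, if $P^TMP$ is block upper triangular with a zero off-diagonal block, then $P^T\boldsymbol{\mathsf{M}}(\overline{\bf y},{\bf y})P=(P^TMP)(P^TD(\overline{\bf y},{\bf y})P)$ has the same zero block, since $P^TD(\overline{\bf y},{\bf y})P$ is again a positive diagonal matrix. For $(v)$, if row $i$ of $M$ has some $M_{ij}>0$, then $\boldsymbol{\mathsf{M}}(\overline{\bf y},{\bf y})_{ij}=M_{ij}\frac{1}{y_j\overline{y}_j}>0$, so the corresponding row of $\boldsymbol{\mathsf{M}}(\overline{\bf y},{\bf y})$ is nonzero.

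Item $(iv)$ is where I expect the only genuine work. Irreducibility, like reducibility, depends only on the zero pattern, so pattern preservation gives irreducibility of $\boldsymbol{\mathsf{M}}(\overline{\bf y},{\bf y})$ at once. The strict antitonicity needs slightly more than $(ii)$: given $({\bf y},\overline{\bf y})\lneq ({\bf z},\overline{\bf z})$ there is an index $j_0$ with $y_{j_0}<z_{j_0}$ or $\overline{y}_{j_0}<\overline{z}_{j_0}$, whence $\frac{1}{z_{j_0}\overline{z}_{j_0}}<\frac{1}{y_{j_0}\overline{y}_{j_0}}$. To convert this column-scaling strict decrease into a strict decrease of a matrix entry I need a nonzero entry in column $j_0$, and here I invoke the fact that an irreducible nonnegative matrix has no zero column: its associated digraph is strongly connected, so some arc enters vertex $j_0$, i.e.\ $M_{i_0 j_0}>0$ for some $i_0$. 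Then $\boldsymbol{\mathsf{M}}({\bf z},\overline{\bf z})_{i_0 j_0}<\boldsymbol{\mathsf{M}}({\bf y},\overline{\bf y})_{i_0 j_0}$, which together with the componentwise inequality from $(ii)$ gives $\boldsymbol{\mathsf{M}}({\bf z},\overline{\bf z})\lneq \boldsymbol{\mathsf{M}}({\bf y},\overline{\bf y})$, exactly the strict antitonicity required. The main obstacle, such as it is, is precisely locating this nonzero entry in the column where the scaling strictly drops, which is why irreducibility (no zero column) is the essential hypothesis here.
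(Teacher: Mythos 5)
Your proof is correct and follows essentially the same route as the paper: the paper reduces everything to the observation that $\boldsymbol{\mathsf{M}}(\overline{\bf y},{\bf y})=M\,\mathrm{diag}\frac{1}{{\bf y}\circ\overline{\bf y}}$ is a positive column scaling of $M$, and for item $(iv)$ it cites its appendix Lemma \ref{irreducible-produs-99}, whose proof is exactly your argument (zero-pattern preservation for irreducibility, and locating a positive entry $M_{i_0j_0}$ in the column $j_0$ where the scalar strictly drops, using that an irreducible nonnegative matrix has no zero column). The only caveat, shared with the paper's own appendix lemma which assumes $n>1$, is that the strict-antitonicity claim degenerates for the $1\times 1$ zero matrix, which the paper's conventions count as irreducible.
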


\begin{proof}
For $(iv)$ see Lemma \ref{irreducible-produs-99}.
\end{proof}

\begin{lem}\label{*-increasing-9876} The function which defines the isotone electric system has the properties:
\begin{enumerate}[(i)]
\item $T_{{\bf k},M}$ is a isotone function with respect to $\leq$. If $M$ has on each row a positive element, then $T_{{\bf k},M}$ is a strongly isotone function.
\item $T_{{\bf k},M}$ is a concave function with respect to $\leq$.
\end{enumerate}

\end{lem}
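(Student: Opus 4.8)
The plan is to derive both statements from results already established, with part $(ii)$ reduced to the elementary convexity of the scalar reciprocal. For part $(i)$, I would first record that $T_{{\bf k},M}$ fits the framework of Section \ref{particular-case-11}. Indeed, the identity \eqref{M-diag-egalitate-11} exhibits $T_{{\bf k},M}$ as a function satisfying the matrix condition \eqref{inequality-Lipschitz-123} with $\boldsymbol{\mathsf{M}}(\overline{\bf y},{\bf y})=M\text{diag}\frac{1}{{\bf y}\circ\overline{\bf y}}$, and this matrix function is continuous, nonnegative, and symmetric in its two arguments by Lemma \ref{props-M-09}$(i)$. Hence Lemma \ref{*-increasing-general}$(i)$ applies directly and yields that $T_{{\bf k},M}$ is isotone. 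For the strong isotonicity under the extra hypothesis, I would invoke Lemma \ref{props-M-09}$(v)$: if $M$ has a positive entry on each row, so does $\boldsymbol{\mathsf{M}}(\overline{\bf y},{\bf y})$ for every $\overline{\bf y},{\bf y}\in\left(\R_+^*\right)^n$, and then Lemma \ref{*-increasing-general}$(ii)(a)$ gives the conclusion.

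For part $(ii)$, I would unwind the concavity inequality. Writing out $\lambda T_{{\bf k},M}({\bf y})+(1-\lambda)T_{{\bf k},M}(\overline{\bf y})$ and $T_{{\bf k},M}(\lambda{\bf y}+(1-\lambda)\overline{\bf y})$, the constant ${\bf k}$ cancels and, after moving the minus sign, the desired inequality becomes
\begin{equation*}
M\,\frac{1}{\lambda{\bf y}+(1-\lambda)\overline{\bf y}}\leq M\left(\lambda\frac{1}{\bf y}+(1-\lambda)\frac{1}{\overline{\bf y}}\right).
\end{equation*}
Because $M$ is nonnegative it is order-preserving for $\leq$, so it suffices to prove the corresponding componentwise inequality for the vectors themselves. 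This reduces, coordinate by coordinate, to the scalar statement that $t\mapsto 1/t$ is convex on $(0,\infty)$, that is $\frac{1}{\lambda y_i+(1-\lambda)\overline{y}_i}\leq \lambda\frac{1}{y_i}+(1-\lambda)\frac{1}{\overline{y}_i}$ for each $i\in\{1,\dots,n\}$, which is Jensen's inequality for the convex reciprocal.

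I do not expect a genuine obstacle here: the content of part $(i)$ is entirely absorbed by the general machinery of Section \ref{particular-case-11}, and part $(ii)$ is the convexity of the reciprocal composed with a nonnegative linear map. The only point demanding care is the bookkeeping of signs in part $(ii)$, where the minus sign in the definition of $T_{{\bf k},M}$ turns the convexity of the map ${\bf y}\mapsto \frac{1}{\bf y}$ into concavity of $T_{{\bf k},M}$; one must check that the nonnegativity of $M$ preserves, rather than reverses, the order $\leq$ when the componentwise reciprocal inequality is multiplied through.
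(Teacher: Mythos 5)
Your proposal is correct and follows essentially the same route as the paper: part $(i)$ rests on the identity \eqref{M-diag-egalitate-11} together with the nonnegativity (resp.\ positive-entry-per-row) of $\boldsymbol{\mathsf{M}}(\overline{\bf y},{\bf y})$, and part $(ii)$ on the convexity of the componentwise reciprocal combined with the order-preservation of the nonnegative matrix $M$. The only cosmetic difference is that the paper writes the scalar convexity as the explicit identity $\lambda\frac{1}{y_i}+(1-\lambda)\frac{1}{\overline{y}_i}-\frac{1}{\lambda y_i+(1-\lambda)\overline{y}_i}=\frac{\lambda(1-\lambda)(y_i-\overline{y}_i)^2}{y_i\overline{y}_i(\lambda y_i+(1-\lambda)\overline{y}_i)}$ rather than citing Jensen.
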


\begin{proof}
$(i)$ If ${\bf y}\leq \overline{\bf y}$, then $\boldsymbol{\mathsf{M}}({\bf y},\overline{\bf y})$ is a nonnegative and $\overline{\bf y}-{\bf y}$ is a nonnegative vector.
For ${\bf y}< \overline{\bf y}$, from Lemma \ref{props-M-09}, then $\boldsymbol{\mathsf{M}}({\bf y},\overline{\bf y})(\overline{\bf y}-{\bf y})$ is a positive vector. 

$(ii)$ For $\lambda\in (0,1)$ we can write
{$$T_{{\bf k},M}(\lambda{\bf y}+(1-\lambda)\overline{\bf y})  - \lambda T_{{\bf k},M}({\bf y})-(1-\lambda)T_{{\bf k},M}(\overline{\bf y}) 
 =\lambda(1-\lambda)M\frac{({\bf y}-\overline{\bf y})\circ ({\bf y}-\overline{\bf y}) }{{\bf y}\circ\overline{\bf y}\circ(\lambda{\bf y}+(1-\lambda)\overline{\bf y})}.$$}
We observe that $\lambda(1-\lambda)>0$ and $\frac{({\bf y}-\overline{\bf y})\circ ({\bf y}-\overline{\bf y}) }{{\bf y}\circ\overline{\bf y}\circ(\lambda{\bf y}+(1-\lambda)\overline{\bf y})}$ is a nonnegative vector.
\end{proof}

In order to delimit  the set $\Phi_{T_{{\bf k},M}}$ we introduce  the vector $\boldsymbol{\Delta}\in \R^n$ with the components $\Delta_i=k_i^2-4M_{ii},\,\,\,i\in \{1,\dots,n\}$. When ${\bf 0}\leq \boldsymbol{\Delta}$ we use the vector $\sqrt{\boldsymbol{\Delta}}$, with the components $\sqrt{\Delta}_i$, to introduce the vectors 
$${\bf y}^{\min}:=\frac{1}{2}({\bf k}-\sqrt{\boldsymbol{\Delta}}),\,\,{\bf y}^{\max}:=\frac{1}{2}({\bf k}+\sqrt{\boldsymbol{\Delta}}).$$

We present a necessary condition for the existence of the fixed points of $T_{{\bf k},M}$. Also, we give a closed ordered interval containing all the fixed points.

\begin{prop}\label{incadrare-fix-crescator-89}
\begin{enumerate}[(i)]
\item A necessary condition for the existence of a fixed point of $\Phi_{T_{{\bf k},M}}$ is ${\bf 0}< {\bf k}$ and ${\bf 0}\leq \boldsymbol{\Delta}$.
\item If the above condition is satisfied, then ${\bf 0}\leq {\bf y}^{\min}$, ${\bf 0}< {\bf y}^{\max}$, and we have the inclusions $\Phi_{T_{{\bf k},M}}\subset \lfloor{\bf y}^{\min},{\bf y}^{\max}\rceil\subset \lfloor{\bf 0},{\bf k}\rceil$.
\end{enumerate}
\end{prop}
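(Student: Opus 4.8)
The plan is to reduce everything to a single scalar quadratic inequality satisfied by each coordinate of a fixed point. If ${\bf y}\in\Phi_{T_{{\bf k},M}}$, then ${\bf y}$ is positive and satisfies ${\bf y}={\bf k}-M\frac{1}{\bf y}$, equivalently ${\bf k}={\bf y}+M\frac{1}{\bf y}$. Reading this componentwise and using that $M$ is nonnegative together with $y_j>0$ for all $j$, I would isolate the diagonal contribution and discard the (nonnegative) off-diagonal terms: for each $i$,
$$k_i=y_i+\sum_{j=1}^n\frac{M_{ij}}{y_j}\geq y_i+\frac{M_{ii}}{y_i}.$$
This single inequality is the engine of the whole proof.

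For part $(i)$, the positivity $k_i>0$ is immediate, since $y_i>0$ and $M_{ii}\geq 0$ force $y_i+\frac{M_{ii}}{y_i}>0$. Multiplying the displayed inequality by $y_i>0$ gives $y_i^2-k_iy_i+M_{ii}\leq 0$. Because the coordinate $y_i$ is an actual real number making the upward-opening quadratic $t\mapsto t^2-k_it+M_{ii}$ nonpositive, its discriminant must be nonnegative, i.e. $k_i^2-4M_{ii}=\Delta_i\geq 0$. Performing this for every $i$ yields ${\bf 0}<{\bf k}$ and ${\bf 0}\leq\boldsymbol{\Delta}$.

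For part $(ii)$, I observe that $y_i^{\min}$ and $y_i^{\max}$ are precisely the two roots of $t^2-k_it+M_{ii}=0$. From $0<k_i$ and $0\leq\Delta_i=k_i^2-4M_{ii}\leq k_i^2$ one gets $0\leq\sqrt{\Delta_i}\leq k_i$, which directly yields $y_i^{\min}=\frac{1}{2}(k_i-\sqrt{\Delta_i})\geq 0$, $y_i^{\max}=\frac{1}{2}(k_i+\sqrt{\Delta_i})>0$, and $y_i^{\max}\leq k_i$; together with $y_i^{\min}\leq y_i^{\max}$ (as $\sqrt{\Delta_i}\geq 0$) this gives ${\bf 0}\leq{\bf y}^{\min}\leq{\bf y}^{\max}\leq{\bf k}$, hence $\lfloor{\bf y}^{\min},{\bf y}^{\max}\rceil\subset\lfloor{\bf 0},{\bf k}\rceil$. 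Finally, for any fixed point ${\bf y}$, the inequality $y_i^2-k_iy_i+M_{ii}\leq 0$ established in part $(i)$ says exactly that $y_i$ lies between the two roots, i.e. $y_i^{\min}\leq y_i\leq y_i^{\max}$; since this holds for each $i$, I conclude ${\bf y}\in\lfloor{\bf y}^{\min},{\bf y}^{\max}\rceil$, proving $\Phi_{T_{{\bf k},M}}\subset\lfloor{\bf y}^{\min},{\bf y}^{\max}\rceil$.

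I do not expect a serious obstacle: the only genuine idea is discarding the off-diagonal nonnegative terms to extract the scalar inequality $k_i\geq y_i+\frac{M_{ii}}{y_i}$, after which everything is elementary analysis of one quadratic. The single point that must be phrased carefully is the logical direction of the discriminant condition — in part $(i)$ the bound $\Delta_i\geq 0$ is \emph{forced} by the existence of a real root $y_i$, whereas in part $(ii)$ the sign properties of ${\bf y}^{\min}$ and ${\bf y}^{\max}$ are \emph{derived} from the already-assumed nonnegativity of $\boldsymbol{\Delta}$.
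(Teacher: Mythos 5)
Your proof is correct and follows essentially the same route as the paper: the key step in both is the componentwise inequality $0<y_i\leq k_i-\frac{M_{ii}}{y_i}\leq k_i$ obtained by discarding the nonnegative off-diagonal terms, followed by the discriminant analysis of $t^2-k_it+M_{ii}\leq 0$. You merely write out explicitly the elementary quadratic manipulations that the paper leaves implicit.
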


\begin{proof} Let be ${\bf y}\in \Phi_{T_{{\bf k},M}}$. The component $i$ verifies
$0<y_i=k_i-\sum\limits_{j=1}^nM_{ij}\frac{1}{y_j}\leq k_i-\frac{M_{ii}}{y_i}\leq k_i.$
We obtain $k_i>0$ and $\Delta_i\geq 0$.
The previous inequalities imply $y_i^{\min}\leq y_i\leq y_i^{\max}$.
\end{proof}

\begin{rem}\label{diagonal-matrix-11} For ${\bf 0}< {\bf k}$ the following statements are equivalent:
\begin{enumerate}[(i)]
\item $M$ is a diagonal matrix with all diagonal entries being positive and $\boldsymbol{\Delta}\in \left(\R_+^*\right)^{n}$.
\item ${\bf y}^{\min},{\bf y}^{\max}\in \Phi_{T_{{\bf k},M}}$.
\end{enumerate}
\end{rem}

By using the values of ${\bf y}^{\max}$, ${\bf y}^{\min}$ and the fixed point iteration sequences we can give necessary and sufficient conditions for the existence of fixed points.

\begin{thm}\label{existenta-k=n-909} Suppose that $M$ is a matrix with all diagonal entries being positive, ${\bf 0}< {\bf k}$, and ${\bf 0}\leq \boldsymbol{\Delta}$. The following statements are equivalent:
\begin{enumerate}[(i)]
\item $\Phi_{T_{{\bf k},M}}\neq  \emptyset$.
\item For all ${\bf y}_0$ with ${\bf y}^{\max}\leq {\bf y}_0$ we have $T(\emph{Fit}_{T_{{\bf k},M},{\bf y}_0})\subset \lfloor{\bf y}^{\min}, \infty\lceil$.
\end{enumerate}
\end{thm}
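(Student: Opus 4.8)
The plan is to recognize this statement as a direct instance of Theorem \ref{existence-231} applied to $T=T_{{\bf k},M}$ with the specified vectors ${\bf y}^{\min}$ and ${\bf y}^{\max}$, so that the entire task reduces to verifying the four structural hypotheses of that theorem: that $T_{{\bf k},M}$ is bounded from above, that ${\bf y}^{\min},{\bf y}^{\max}\in\left(\R_+^*\right)^n$ with ${\bf y}^{\min}\leq {\bf y}^{\max}$, that $\Phi_{T_{{\bf k},M}}\subset \lfloor{\bf y}^{\min},{\bf y}^{\max}\rceil$, and that $\lfloor {\bf y}^{\max},\infty\lceil\subset S_{T_{{\bf k},M}}^-$.

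First I would record boundedness and the basic geometry of the interval. Lemma \ref{bounded-TkM}$(i)$ gives that $T_{{\bf k},M}$ is bounded from above by ${\bf k}$. Since $\boldsymbol{\Delta}\geq {\bf 0}$, the vector $\sqrt{\boldsymbol{\Delta}}$ is defined and nonnegative, so ${\bf y}^{\min}=\frac{1}{2}({\bf k}-\sqrt{\boldsymbol{\Delta}})\leq \frac{1}{2}({\bf k}+\sqrt{\boldsymbol{\Delta}})={\bf y}^{\max}$, and $k_i>0$ with $\sqrt{\Delta_i}\geq 0$ yields $y_i^{\max}>0$. The crucial refinement over Proposition \ref{incadrare-fix-crescator-89} is that here I need ${\bf y}^{\min}\in \left(\R_+^*\right)^n$ rather than merely ${\bf y}^{\min}\geq {\bf 0}$, and this is exactly where the hypothesis $M_{ii}>0$ enters: for each $i$, $\Delta_i=k_i^2-4M_{ii}<k_i^2$ together with $k_i>0$ gives $\sqrt{\Delta_i}<k_i$, hence $y_i^{\min}=\frac{1}{2}(k_i-\sqrt{\Delta_i})>0$. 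The inclusion $\Phi_{T_{{\bf k},M}}\subset \lfloor{\bf y}^{\min},{\bf y}^{\max}\rceil$ is then precisely Proposition \ref{incadrare-fix-crescator-89}$(ii)$, which applies since ${\bf 0}<{\bf k}$ and ${\bf 0}\leq \boldsymbol{\Delta}$ are assumed.

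The one genuine computation, and the step I expect to carry the weight of the argument, is showing $\lfloor {\bf y}^{\max},\infty\lceil\subset S_{T_{{\bf k},M}}^-$. Fix ${\bf y}\geq {\bf y}^{\max}$; then ${\bf y}\in \left(\R_+^*\right)^n$, and I must verify $T_{{\bf k},M}({\bf y})\leq {\bf y}$ componentwise, that is $\sum_{j}M_{ij}\frac{1}{y_j}\geq k_i-y_i$ for each $i$. Because $M$ is nonnegative I may discard the off-diagonal terms to obtain the lower bound $\sum_{j}M_{ij}\frac{1}{y_j}\geq M_{ii}\frac{1}{y_i}$, so it suffices to prove $M_{ii}\frac{1}{y_i}\geq k_i-y_i$, equivalently $y_i^2-k_iy_i+M_{ii}\geq 0$. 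The roots of this scalar quadratic are exactly $y_i^{\min}$ and $y_i^{\max}$, and its leading coefficient is positive, so it is nonnegative whenever $y_i\geq y_i^{\max}$; since $y_i\geq y_i^{\max}$ by assumption, the inequality holds. This gives $T_{{\bf k},M}({\bf y})\leq {\bf y}$, i.e. ${\bf y}\in S_{T_{{\bf k},M}}^-$.

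With boundedness, the relation ${\bf y}^{\min}\leq {\bf y}^{\max}$ in $\left(\R_+^*\right)^n$, the containment $\Phi_{T_{{\bf k},M}}\subset \lfloor{\bf y}^{\min},{\bf y}^{\max}\rceil$, and $\lfloor{\bf y}^{\max},\infty\lceil\subset S_{T_{{\bf k},M}}^-$ all established, the equivalence $(i)\Leftrightarrow(ii)$ follows immediately from Theorem \ref{existence-231}. No further obstacle remains; the only subtlety worth flagging is the strict positivity of ${\bf y}^{\min}$, which is precisely what forces the strict positivity of the diagonal entries of $M$ into the hypotheses of the statement.
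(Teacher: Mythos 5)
Your proposal is correct and follows the same route as the paper: verify the hypotheses of Theorem \ref{existence-231} for $T_{{\bf k},M}$ with the given ${\bf y}^{\min},{\bf y}^{\max}$, the key step being the componentwise estimate $(T_{{\bf k},M}({\bf y}))_i\leq k_i-M_{ii}\frac{1}{y_i}\leq y_i$ for ${\bf y}\geq{\bf y}^{\max}$, which is exactly your quadratic $y_i^2-k_iy_i+M_{ii}\geq 0$. You merely spell out two points the paper leaves as observations (the strict positivity of ${\bf y}^{\min}$ from $M_{ii}>0$, and the boundedness from Lemma \ref{bounded-TkM}), so there is nothing substantively different.
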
 

\begin{proof} We observe that ${\bf 0}<{\bf y}^{\min}$ and
we prove that $\lfloor {\bf y}^{\max},\infty \lceil \subset S^-_{T_{{\bf k},M}}$. For ${\bf y}^{\max}\leq {\bf y}$ we have $(T_{{\bf k},M}({\bf y}))_i=k_i-\sum\limits_{j=1}^nM_{ij}\frac{1}{y_{j}}\leq k_i-M_{ii}\frac{1}{y_i}\leq y_i.$
From Proposition \ref{incadrare-fix-crescator-89} we have $\Phi_{T_{{\bf k},M}}\subset \lfloor{\bf y}^{\min},{\bf y}^{\max}\rceil$. We apply Theorem \ref{existence-231}.
\end{proof}

From Theorem \ref{x-Bix-bounded-112} we obtain the following result. 

\begin{thm}\label{increasing-dominates-22}
If $\Phi_{T_{{\bf k},M}}\neq \emptyset$, then ${\bf y}_{T_{{\bf k},M}}^{\Box}=(\sup\limits_{{\bf y}\in \Phi_{T_{{\bf k},M}}} y_1, \dots, \sup\limits_{{\bf y}\in \Phi_{T_{{\bf k},M}}} y_n)^t\in \Phi_{T_{{\bf k},M}}$. The dominant fixed point ${\bf y}_{T_{{\bf k},M}}^{\Box}$ dominates all the fixed points of $T_{{\bf k},M}$. Also, it dominates the $\omega$-limit set of $T$ ($\Omega_{T_{{\bf k},M}}\subset \lfloor {\bf 0},{\bf y}_{T_{{\bf k},M}}^{\Box}\rceil$). For ${\bf y}_0$, with ${\bf y}_{T_{{\bf k},M}}^{\Box}\leq {\bf y}_0$, the sequence $\emph {Fit}_{T_{{\bf k},M},{\bf y}_0}$ is convergent and its limit is $\omega_{T_{{\bf k},M}}({\bf y}_0)={\bf y}_{T_{{\bf k},M}}^{\Box}$.

\end{thm}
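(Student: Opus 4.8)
The plan is to recognize this statement as a direct application of Theorem \ref{x-Bix-bounded-112} to the specific function $T=T_{{\bf k},M}$, so that the work reduces entirely to verifying the two hypotheses of that theorem and then transcribing its conclusions. Theorem \ref{x-Bix-bounded-112} requires that $T$ be bounded from above by some vector in $\left(\R_+^*\right)^n$ and that $\Phi_T\neq\emptyset$. The second requirement is exactly our standing assumption $\Phi_{T_{{\bf k},M}}\neq\emptyset$, so the only genuine verification concerns the first.

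First I would invoke Lemma \ref{bounded-TkM}(i), which states that $T_{{\bf k},M}$ is bounded from above by ${\bf k}$. This alone does not suffice, because Theorem \ref{x-Bix-bounded-112} demands that the upper bound lie in $\left(\R_+^*\right)^n$, that is, that ${\bf k}$ be strictly positive. This is precisely where the hypothesis $\Phi_{T_{{\bf k},M}}\neq\emptyset$ is used: Proposition \ref{incadrare-fix-crescator-89}(i) gives ${\bf 0}<{\bf k}$ as a necessary condition for the existence of a fixed point. Hence ${\bf k}\in\left(\R_+^*\right)^n$, and both hypotheses of Theorem \ref{x-Bix-bounded-112} are satisfied with this concrete choice of $T$ and of the bounding vector.

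With the hypotheses in place, each assertion of the statement reads off a corresponding part of the earlier machinery. The identity ${\bf y}_{T_{{\bf k},M}}^{\Box}=(\sup\limits_{{\bf y}\in\Phi_{T_{{\bf k},M}}}y_1,\dots,\sup\limits_{{\bf y}\in\Phi_{T_{{\bf k},M}}}y_n)^t$ follows from Theorem \ref{x-Bix-bounded-112}(i); the membership ${\bf y}_{T_{{\bf k},M}}^{\Box}\in\Phi_{T_{{\bf k},M}}$ and the fact that it dominates every fixed point come from Theorem \ref{x-Bix}(i). The inclusion $\Omega_{T_{{\bf k},M}}\subset\lfloor{\bf 0},{\bf y}_{T_{{\bf k},M}}^{\Box}\rceil$ is exactly Theorem \ref{x-Bix-bounded-112}(iii), and the convergence of $\emph{Fit}_{T_{{\bf k},M},{\bf y}_0}$ to ${\bf y}_{T_{{\bf k},M}}^{\Box}$ for every starting point ${\bf y}_0$ with ${\bf y}_{T_{{\bf k},M}}^{\Box}\leq{\bf y}_0$ is Theorem \ref{x-Bix-bounded-112}(v), since such ${\bf y}_0$ lie in $\lfloor{\bf y}_{T_{{\bf k},M}}^{\Box},\infty\lceil$.

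I expect no obstacle of real substance here; the one place where care is required, and the only step that is not a pure citation, is the passage from ``bounded above by ${\bf k}$'' to ``${\bf k}$ strictly positive'', which must draw on the nonemptiness of $\Phi_{T_{{\bf k},M}}$ through Proposition \ref{incadrare-fix-crescator-89}. Everything else is bookkeeping: matching the abstract function $T$ and its abstract upper bound with the concrete $T_{{\bf k},M}$ and the concrete ${\bf k}$.
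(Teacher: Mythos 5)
Your proposal is correct and follows exactly the route the paper intends: the paper gives no proof beyond the sentence ``From Theorem \ref{x-Bix-bounded-112} we obtain the following result,'' and you simply make that citation precise. The one detail you add --- deducing ${\bf k}\in\left(\R_+^*\right)^n$ from $\Phi_{T_{{\bf k},M}}\neq\emptyset$ via Proposition \ref{incadrare-fix-crescator-89}(i), so that the hypothesis of Theorem \ref{x-Bix-bounded-112} is genuinely met --- is exactly the point the paper leaves implicit, and you handle it correctly.
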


\begin{rem}\label{permutation-102}
For $P$ a permutation matrix, we have $P^T{\bf y}_{T_{{\bf k},M}}^{\Box}={\bf y}_{T_{P^T{\bf k},P^TMP }}^{\Box}$.
\end{rem}

\begin{thm}\label{raza-mai-mica-1}
If $\Phi_{T_{{\bf k},M}}\neq  \emptyset$, then $\rho\left(M\emph{diag}\frac{1}{{\bf y}_{T_{{\bf k},M}}^{\Box}\circ{\bf y}_{T_{{\bf k},M}}^{\Box}}\right)\leq 1$.
\end{thm}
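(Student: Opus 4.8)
The plan is to identify the matrix in the statement as $\boldsymbol{\mathsf{M}}({\bf y}^{\Box},{\bf y}^{\Box})$ evaluated at the dominant fixed point, and then to deduce the bound from Theorem \ref{inegalitate-yBox-10}. Since $\Phi_{T_{{\bf k},M}}\neq\emptyset$, Theorem \ref{increasing-dominates-22} supplies the dominant fixed point ${\bf y}^{\Box}:={\bf y}_{T_{{\bf k},M}}^{\Box}\in\Phi_{T_{{\bf k},M}}\subset(\R_+^*)^n$, and by the very definition of $\boldsymbol{\mathsf{M}}$ we have $M\text{diag}\frac{1}{{\bf y}^{\Box}\circ{\bf y}^{\Box}}=\boldsymbol{\mathsf{M}}({\bf y}^{\Box},{\bf y}^{\Box})$; so the claim is precisely $\rho(\boldsymbol{\mathsf{M}}({\bf y}^{\Box},{\bf y}^{\Box}))\leq 1$.

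I would first dispose of the \emph{regular case}, in which every row of $M$ contains a positive entry. Then $T_{{\bf k},M}$ is bounded from above by Lemma \ref{bounded-TkM}, and by Lemma \ref{props-M-09} the matrix $\boldsymbol{\mathsf{M}}(\overline{\bf y},{\bf y})$ has a positive entry on each row for all arguments. All hypotheses of Theorem \ref{inegalitate-yBox-10} are met, and it yields $\rho(\boldsymbol{\mathsf{M}}({\bf y}^{\Box},{\bf y}^{\Box}))\leq 1$ at once.

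The main obstacle is that nothing in the hypotheses rules out identically zero rows of $M$, in which case $\boldsymbol{\mathsf{M}}$ fails the row condition and Theorem \ref{inegalitate-yBox-10} does not apply directly; moreover the iteration $\text{Fit}_{T_{{\bf k},M},{\bf y}_0}$ is then only antitone, not strongly antitone, so the error vector ${\bf y}_r-{\bf y}^{\Box}$ need not be strictly positive and the Collatz--Wielandt estimate (Corollary 8.1.29 of \cite{horn}) underlying Theorem \ref{spectral-radius-strongly-12} is unavailable. I would handle this by induction on $n$, peeling off the zero rows. Let $Z$ be the set of indices of the zero rows of $M$ and $R$ its complement; if $Z=\emptyset$ we are in the regular case. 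Every fixed point ${\bf y}$ satisfies $y_i=k_i$ for $i\in Z$ (and $k_i>0$ by Proposition \ref{incadrare-fix-crescator-89}), so setting $\tilde{k}_i:=k_i-\sum_{j\in Z}M_{ij}/k_j$ for $i\in R$, the restriction ${\bf y}_R$ is a fixed point of the lower-dimensional system $T_{\tilde{\bf k},M_{RR}}$, and conversely each such fixed point extends uniquely by $y_i=k_i$ on $Z$. This gives an order-preserving bijection $\Phi_{T_{{\bf k},M}}\leftrightarrow\Phi_{T_{\tilde{\bf k},M_{RR}}}$, whence $\Phi_{T_{\tilde{\bf k},M_{RR}}}\neq\emptyset$ and its dominant fixed point is exactly $({\bf y}^{\Box})_R$.

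It then remains to compare spectral radii. Permuting the indices of $Z$ to the front turns $\boldsymbol{\mathsf{M}}({\bf y}^{\Box},{\bf y}^{\Box})=M\text{diag}\frac{1}{{\bf y}^{\Box}\circ{\bf y}^{\Box}}$ into a block lower-triangular matrix with diagonal blocks $0_{Z\times Z}$ and $M_{RR}\text{diag}\frac{1}{({\bf y}^{\Box})_R\circ({\bf y}^{\Box})_R}$; hence its spectrum is $\{0\}$ together with that of the second block, and $\rho(\boldsymbol{\mathsf{M}}({\bf y}^{\Box},{\bf y}^{\Box}))=\rho\big(M_{RR}\text{diag}\frac{1}{({\bf y}^{\Box})_R\circ({\bf y}^{\Box})_R}\big)$. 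Since $|R|<n$ whenever $Z\neq\emptyset$, the inductive hypothesis applied to $T_{\tilde{\bf k},M_{RR}}$ (whose dominant fixed point is $({\bf y}^{\Box})_R$) gives that this last spectral radius is $\leq 1$, closing the induction. The delicate points to write carefully are the order-preserving bijection of fixed-point sets, so that $({\bf y}^{\Box})_R$ is genuinely the dominant fixed point of the reduced system, and the block-triangular spectral computation; both are elementary but essential, and together they reduce the general statement to the regular case settled by Theorem \ref{inegalitate-yBox-10}.
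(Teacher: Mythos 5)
Your proposal is correct and follows essentially the same route as the paper: the regular case (every row of $M$ has a positive entry) is settled by Theorem \ref{inegalitate-yBox-10} via Lemmas \ref{bounded-TkM} and \ref{props-M-09}, and the general case is handled by peeling off the zero rows, using the fact that fixed points satisfy $y_i=k_i$ there, the induced correspondence of dominant fixed points, and the block-triangular spectral identity. The paper phrases your induction on $n$ as a finitely repeated reduction, but the argument is the same.
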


\begin{proof}
When $M$ has on each row a positive element we apply Theorem \ref{inegalitate-yBox-10}, Lemma \ref{bounded-TkM}, and Lemma \ref{props-M-09}.

Next, we consider the case when some rows have all elements equal to zero. There exists $P$ a permutation matrix such that $P^TMP=\begin{pmatrix} O_{s\times s} & O_{s\times (n-s)} \\ A & B \end{pmatrix}$, $A\in \mathcal{M}_{(n-s)\times s}(\R_+)$, $B\in \mathcal{M}_{(n-s)\times (n-s)}(\R_+)$, and on each row of the matrix $\begin{pmatrix} A & B\end{pmatrix}$ we have a positive element. The vector ${\bf y}=({\bf y}_1, {\bf y}_2)^t$, ${\bf y}_1\in \R^s$, ${\bf y}_2\in \R^{n-s}$ is a fixed point of $T_{P^T{\bf k},P^TMP }$ if and only if ${\bf y}_1={\bf k}_1$ and ${\bf y}_2\in \Phi_{T_{{\bf k}_2-A\frac{1}{{\bf k}_1},B}}$ with $P^T{\bf k}=({\bf k}_1, {\bf k}_2)^t$. We deduce that ${\bf y}_{T_{P^T{\bf k},P^TMP }}^{\Box}=({\bf k}_1, {\bf y}_{_{{\bf k}_2-A\frac{1}{{\bf k}_1},B}}^{\Box})^t$. From Remark \ref{permutation-102} we have\footnote{$P^T{\bf y}\circ P^T\overline{\bf y}=P^T({\bf y}\circ \overline{\bf y})$, $P^T\frac{1}{\bf y}=\frac{1}{P^T {\bf y}}$}
$\frac{1}{{\bf y}_{T_{P^T{\bf k},P^TMP }}^{\Box}\circ {\bf y}_{T_{P^T{\bf k},P^TMP }}^{\Box}}=\frac{1}{P^T{\bf y}_{T_{{\bf k},M}}^{\Box}\circ P^T{\bf y}_{T_{{\bf k},M}}^{\Box}}=P^T\frac{1}{{\bf y}_{T_{{\bf k},M}}^{\Box}\circ {\bf y}_{T_{{\bf k},M}}^{\Box}}$
and\footnote{$\text{diag}(P^T{\bf y})=P^T\text{diag}({\bf y})P$.} 
$P^TMP\text{diag}\frac{1}{{\bf y}_{T_{P^T{\bf k},P^TMP }}^{\Box}\circ {\bf y}_{T_{P^T{\bf k},P^TMP }}^{\Box}}=P^TM\text{diag}\frac{1}{{\bf y}_{T_{{\bf k},M}}^{\Box}\circ {\bf y}_{T_{{\bf k},M}}^{\Box}}P.$
We deduce that 
$\rho\left(P^TMP\text{diag}\frac{1}{{\bf y}_{T_{P^T{\bf k},P^TMP }}^{\Box}\circ {\bf y}_{T_{P^T{\bf k},P^TMP }}^{\Box}}\right)=\rho\left(M\text{diag}\frac{1}{{\bf y}_{T_{{\bf k},M}}^{\Box}\circ {\bf y}_{T_{{\bf k},M}}^{\Box}}\right).$
We obtain that 
$\rho\left(M\text{diag}\frac{1}{{\bf y}_{T_{{\bf k},M}}^{\Box}\circ {\bf y}_{T_{{\bf k},M}}^{\Box}}\right)=\rho\left(B\text{diag}\frac{1}{{\bf y}_{_{{\bf k}_2-A\frac{1}{{\bf k}_1},B}}^{\Box}\circ {\bf y}_{_{{\bf k}_2-A\frac{1}{{\bf k}_1},B}}^{\Box}}\right).$
In the case when $B=O_{(n-s)\times (n-s)}$ we obtain that the spectral radius is $0<1$.

In the case when $B$ has on each row a positive element we apply the above result. 

In the case when $B$ has a row with all elements equal to zero we repeat the above reduction. After a finite number of steps we obtain a matrix ''$B$'' which is the zero matrix or it is a matrix with a positive element on each row. 
\end{proof}

\begin{rem}
We observe that the Jacobian matrix of $T_{{\bf k},M}$ is 
$J_{T_{{\bf k},M}}({\bf y})=M\text{diag}\frac{1}{{\bf y}\circ{\bf y}}$.
When $\Phi_{T_{{\bf k},M}}\neq  \emptyset$ and $\rho\left(M\text{diag}\frac{1}{{\bf y}_{T_{{\bf k},M}}^{\Box}\circ{\bf y}_{T_{{\bf k},M}}^{\Box}}\right)< 1$ we have that the matrix $J_{T_{{\bf k},M}}({\bf y}_{T_{{\bf k},M}}^{\Box})$ has all the eigenvalues with the modulus $<1$. From the theory of discrete dynamical systems we have that ${\bf y}_{T_{{\bf k},M}}^{\Box}$ is asymptotically stable (see \cite{holmes}) for the fixed point iteration method. 
\end{rem}

Using Theorem \ref{q-ineq-sectral-7650-v2}, Theorem \ref{x-Bix-bounded-112}, and the fact that $\boldsymbol{\mathsf{M}}$ is antitone,  when $T_{{\bf k},M}$ have at least two fixed points, we have the following inequalities involving spectral radius. 

\begin{thm}\label{spectral-radius-q=n-879}
Let be ${\bf y},\overline{\bf y}\in \Phi_{T_{{\bf k},M}}$ such that ${\bf y}\neq \overline{\bf y}$. The following results hold.
\begin{enumerate}[(i)]
\item $\rho\left(M\emph{diag}\frac{1}{{\bf y}\circ\overline{\bf y}}\right)\geq 1$.
\item If ${\bf y}< \overline{\bf y}$, then $\rho\left(M\emph{diag}\frac{1}{{\bf y}\circ\overline{\bf y}}\right)= 1$.
\item If ${\bf y}\neq {\bf y}_{T_{{\bf k},M}}^{\Box}$, then $\rho\left(M\emph{diag}\frac{1}{{\bf y}\circ{\bf y}}\right)\geq 1$.
\end{enumerate}
\end{thm}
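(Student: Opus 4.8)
The plan is to reduce each part to the already-established Theorem \ref{q-ineq-sectral-7650-v2}, using the identification $\boldsymbol{\mathsf{M}}(\overline{\bf y},{\bf y})=M\text{diag}\frac{1}{{\bf y}\circ\overline{\bf y}}$ from Lemma \ref{props-M-09} together with the symmetry and antitonicity of $\boldsymbol{\mathsf{M}}$. For $(i)$, since ${\bf y},\overline{\bf y}\in \Phi_{T_{{\bf k},M}}$ are distinct fixed points, Theorem \ref{q-ineq-sectral-7650-v2}$(i)$ applies directly to give $\rho(\boldsymbol{\mathsf{M}}(\overline{\bf y},{\bf y}))\geq 1$; rewriting $\boldsymbol{\mathsf{M}}(\overline{\bf y},{\bf y})=M\text{diag}\frac{1}{{\bf y}\circ\overline{\bf y}}$ yields the claim. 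Part $(ii)$ is likewise immediate from Theorem \ref{q-ineq-sectral-7650-v2}$(iii)$, which already gives the equality $\rho=1$ under the hypothesis ${\bf y}<\overline{\bf y}$.

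The substantive part is $(iii)$, and here I would bring in the dominant fixed point. Since $\Phi_{T_{{\bf k},M}}\neq\emptyset$ (it contains ${\bf y}$ and $\overline{\bf y}$), Theorem \ref{increasing-dominates-22} supplies the dominant fixed point ${\bf y}_{T_{{\bf k},M}}^{\Box}$, which dominates every fixed point, so ${\bf y}\leq {\bf y}_{T_{{\bf k},M}}^{\Box}$; the hypothesis ${\bf y}\neq {\bf y}_{T_{{\bf k},M}}^{\Box}$ upgrades this to ${\bf y}\lneq {\bf y}_{T_{{\bf k},M}}^{\Box}$. Then I would chain three observations. First, applying $(i)$ (just proved) to the distinct fixed points ${\bf y}$ and ${\bf y}_{T_{{\bf k},M}}^{\Box}$ gives $\rho(\boldsymbol{\mathsf{M}}({\bf y}_{T_{{\bf k},M}}^{\Box},{\bf y}))\geq 1$. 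Second, by the symmetry of $\boldsymbol{\mathsf{M}}$ (Lemma \ref{props-M-09}$(i)$) this equals $\rho(\boldsymbol{\mathsf{M}}({\bf y},{\bf y}_{T_{{\bf k},M}}^{\Box}))$. Third, since $({\bf y},{\bf y})\leq ({\bf y},{\bf y}_{T_{{\bf k},M}}^{\Box})$ and $\boldsymbol{\mathsf{M}}$ is antitone (Lemma \ref{props-M-09}$(ii)$), we obtain the entrywise matrix inequality $\boldsymbol{\mathsf{M}}({\bf y},{\bf y}_{T_{{\bf k},M}}^{\Box})\leq \boldsymbol{\mathsf{M}}({\bf y},{\bf y})$. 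Both matrices are nonnegative, so the monotonicity of the spectral radius on nonnegative matrices (\cite{horn}, Theorem 8.1.18) yields $\rho(\boldsymbol{\mathsf{M}}({\bf y},{\bf y}))\geq \rho(\boldsymbol{\mathsf{M}}({\bf y},{\bf y}_{T_{{\bf k},M}}^{\Box}))\geq 1$, which is exactly $\rho(M\text{diag}\frac{1}{{\bf y}\circ{\bf y}})\geq 1$.

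The one delicate step, and the one I would watch most carefully, is the passage $\boldsymbol{\mathsf{M}}({\bf y},{\bf y}_{T_{{\bf k},M}}^{\Box})\leq \boldsymbol{\mathsf{M}}({\bf y},{\bf y})$: concretely, from ${\bf y}\leq {\bf y}_{T_{{\bf k},M}}^{\Box}$ one has ${\bf y}\circ{\bf y}_{T_{{\bf k},M}}^{\Box}\geq {\bf y}\circ{\bf y}$ entrywise, hence $\frac{1}{{\bf y}\circ{\bf y}_{T_{{\bf k},M}}^{\Box}}\leq \frac{1}{{\bf y}\circ{\bf y}}$, and because $M$ is nonnegative this enlargement of the diagonal factor enlarges the product entrywise. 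Combined with the entrywise monotonicity of $\rho$ for nonnegative matrices, this is what forces $\rho\geq 1$ even when $M$ is reducible. It is worth emphasizing that the argument uses only ${\bf y}\lneq {\bf y}_{T_{{\bf k},M}}^{\Box}$, not strict domination in every component, so no irreducibility hypothesis on $M$ is needed; the strengthening to equality in $(ii)$ (and, under irreducibility, the stronger conclusions of Theorem \ref{q-ineq-sectral-7650-v2}$(iv)$) are precisely what is lost when one only knows ${\bf y}\neq {\bf y}_{T_{{\bf k},M}}^{\Box}$.
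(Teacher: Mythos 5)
Your proposal is correct and follows essentially the same route as the paper, which proves this theorem by the one-line recipe ``Using Theorem \ref{q-ineq-sectral-7650-v2}, Theorem \ref{x-Bix-bounded-112}, and the fact that $\boldsymbol{\mathsf{M}}$ is antitone'': parts $(i)$ and $(ii)$ are direct specializations of Theorem \ref{q-ineq-sectral-7650-v2}$(i)$ and $(iii)$, and part $(iii)$ combines ${\bf y}\lneq {\bf y}_{T_{{\bf k},M}}^{\Box}$ with antitonicity of $\boldsymbol{\mathsf{M}}$ and the entrywise monotonicity of the spectral radius on nonnegative matrices, exactly as you do. Your write-up merely makes explicit the details the paper leaves implicit.
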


In what follows we study the case when the matrix $M$ is irreducible.

\begin{thm}\label{q=n-irreducible-045}
Suppose that $n>1$, $M$ is an irreducible matrix and $\Phi_{T_{{\bf k},M}}\neq \emptyset$.
\begin{enumerate}[(i)]
\item If ${\bf y}\in \Phi_{T_{{\bf k},M}}$, ${\bf y}\neq {\bf y}_{T_{{\bf k},M}}^{\Box}$, then ${\bf y}< {\bf y}_{T_{{\bf k},M}}^{\Box}$ and $\rho\left(M\emph{diag}\frac{1}{{\bf y}\circ{\bf y}_{T_{{\bf k},M}}^{\Box}}\right)= 1$.
\item If ${\bf y},\overline{\bf y}\in \Phi_{T_{{\bf k},M}}$ such that ${\bf y}$, $\overline{\bf y}$, and ${\bf y}_{T_{{\bf k},M}}^{\Box}$ are different two by two, then ${\bf y}$ and $\overline{\bf y}$ are not comparable. 
\item If $\Phi_{T_{{\bf k},M}}$ has at least two distinct elements, then  $\rho\left(M\emph{diag}\frac{1}{{\bf y}_{T_{{\bf k},M}}^{\Box}\circ{\bf y}_{T_{{\bf k},M}}^{\Box}}\right)< 1$ and  $\rho\left(M\emph{diag}\frac{1}{{\bf y}\circ{\bf y}}\right)> 1$ for all ${\bf y}\in \Phi_{T_{{\bf k},M}}\backslash\{{\bf y}_{T_{{\bf k},M}}^{\Box}\}$.
\item If $\Phi_{T_{{\bf k},M}}$ has at least two elements and ${\bf y}_0\in\bigcup_{{\bf y}\in \Phi_{T_{{\bf k},M}}\backslash\{{\bf y}_{T_{{\bf k},M}}^{\Box}\}}\rfloor{\bf y},\infty\lceil$, then $\emph{Fit}_{{T_{{\bf k},M}},{\bf y}_0}$ is convergent and its limit is $\omega_{T_{{\bf k},M}}({\bf y}_0)={\bf y}_{T_{{\bf k},M}}^{\Box}$.
\item ${\bf y}_{T_{{\bf k},M}}^{\Box}$ is an isolated fixed point of $T_{{\bf k},M}$.
\end{enumerate}
\end{thm}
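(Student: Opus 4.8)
The plan is to reduce every one of the five parts to the abstract theory of Sections~\ref{T-increasing-general} and~\ref{particular-case-11}, which applies because $T_{{\bf k},M}$ is an instance of the matrix condition~\eqref{inequality-Lipschitz-123} with $\boldsymbol{\mathsf{M}}(\overline{\bf y},{\bf y})=M\text{diag}\frac{1}{{\bf y}\circ\overline{\bf y}}$. First I would record the hypotheses that feed the abstract machinery. Since $n>1$ and $M$ is irreducible, $M$ can have no zero row (a zero row would leave a vertex with no outgoing edge, contradicting strong connectivity of its directed graph), so each row of $M$ has a positive element; hence $T_{{\bf k},M}$ is strongly isotone (Lemma~\ref{*-increasing-9876}), while $\boldsymbol{\mathsf{M}}(\overline{\bf y},{\bf y})$ is, for every pair of arguments, an irreducible matrix and $\boldsymbol{\mathsf{M}}$ is strictly antitone (Lemma~\ref{props-M-09}$(iv)$). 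Moreover $T_{{\bf k},M}$ is bounded from above by ${\bf k}$ (Lemma~\ref{bounded-TkM}) and concave (Lemma~\ref{*-increasing-9876}). With these facts the hypotheses of Theorems~\ref{q-ineq-sectral-7650-v2}, \ref{increasing-dominates-22}, and~\ref{irreducible-particular-M-76} are all satisfied.

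For $(i)$, Theorem~\ref{increasing-dominates-22} gives ${\bf y}\leq {\bf y}_{T_{{\bf k},M}}^{\Box}$, so ${\bf y}\lneq {\bf y}_{T_{{\bf k},M}}^{\Box}$ because ${\bf y}\neq {\bf y}_{T_{{\bf k},M}}^{\Box}$. The matrix $\boldsymbol{\mathsf{M}}({\bf y}_{T_{{\bf k},M}}^{\Box},{\bf y})=M\text{diag}\frac{1}{{\bf y}\circ {\bf y}_{T_{{\bf k},M}}^{\Box}}$ is irreducible, so Theorem~\ref{q-ineq-sectral-7650-v2}$(iv)$ upgrades the inequality to ${\bf y}< {\bf y}_{T_{{\bf k},M}}^{\Box}$ and yields $\rho\bigl(M\text{diag}\frac{1}{{\bf y}\circ {\bf y}_{T_{{\bf k},M}}^{\Box}}\bigr)=1$. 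Parts $(ii)$, $(iii)$, and $(iv)$ are then the direct translations of Theorem~\ref{irreducible-particular-M-76}$(ii)$, $(iii)$-$(a)$, and $(iii)$-$(b)$ to $T_{{\bf k},M}$: the non-comparability of two non-dominant fixed points, the strict inequalities $\rho\bigl(M\text{diag}\frac{1}{{\bf y}_{T_{{\bf k},M}}^{\Box}\circ {\bf y}_{T_{{\bf k},M}}^{\Box}}\bigr)<1<\rho\bigl(M\text{diag}\frac{1}{{\bf y}\circ {\bf y}}\bigr)$, and the convergence of the iteration from any point strictly above a non-dominant fixed point. For $(iv)$ the concavity from Lemma~\ref{*-increasing-9876} is exactly the extra hypothesis required by Theorem~\ref{irreducible-particular-M-76}$(iii)$-$(b)$.

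Part $(v)$ is the only part not read off directly from an earlier statement, and I expect it to be the main point. I would split into two cases. If $\Phi_{T_{{\bf k},M}}=\{{\bf y}_{T_{{\bf k},M}}^{\Box}\}$, the dominant fixed point is trivially isolated. Otherwise $\Phi_{T_{{\bf k},M}}$ has at least two elements, so part $(iii)$ gives $\rho\bigl(\boldsymbol{\mathsf{M}}({\bf y}_{T_{{\bf k},M}}^{\Box},{\bf y}_{T_{{\bf k},M}}^{\Box})\bigr)=\rho\bigl(M\text{diag}\frac{1}{{\bf y}_{T_{{\bf k},M}}^{\Box}\circ {\bf y}_{T_{{\bf k},M}}^{\Box}}\bigr)<1$. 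The contrapositive of Theorem~\ref{q-ineq-sectral-7650-v2}$(ii)$ — which asserts that a non-isolated fixed point ${\bf y}$ forces $\rho(\boldsymbol{\mathsf{M}}({\bf y},{\bf y}))\geq 1$ — then shows ${\bf y}_{T_{{\bf k},M}}^{\Box}$ cannot be a limit of other fixed points, i.e.\ it is isolated.

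The essentially routine work is verifying the hypothesis transfers; the one genuinely load-bearing observation is that irreducibility together with $n>1$ rules out zero rows, which is what makes $\boldsymbol{\mathsf{M}}$ strictly antitone and every $\boldsymbol{\mathsf{M}}(\overline{\bf y},{\bf y})$ irreducible, and hence unlocks the strict spectral inequality of $(iii)$ that in turn drives the isolation in $(v)$.
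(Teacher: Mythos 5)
Your proof is correct and follows essentially the same route as the paper's: the paper likewise reduces all five parts to Lemma \ref{props-M-09}, Lemma \ref{*-increasing-9876}, Theorem \ref{q-ineq-sectral-7650-v2}, Theorem \ref{irreducible-particular-M-76}, and Theorem \ref{x-Bix-bounded-112} (your Theorem \ref{increasing-dominates-22} is its corollary), obtaining part $(v)$ from part $(iii)$ together with the contrapositive of Theorem \ref{q-ineq-sectral-7650-v2}$(ii)$. Your two additions --- the zero-row/strong-isotonicity observation (not actually needed, since Theorem \ref{irreducible-particular-M-76} only requires strict antitonicity and irreducibility of $\boldsymbol{\mathsf{M}}$, both from Lemma \ref{props-M-09}$(iv)$) and the explicit singleton case in $(v)$ --- are harmless refinements of the paper's citation-style proof.
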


\begin{proof}
For $(i)$ we apply Theorem \ref{x-Bix-bounded-112} and Theorem \ref{q-ineq-sectral-7650-v2}.
For $(ii)$ and $(iii)$ we use Lemma \ref{props-M-09} and Theorem \ref{irreducible-particular-M-76}.
The statement $(iv)$ is the consequence of Lemma \ref{props-M-09}, Lemma \ref{*-increasing-9876}, and Theorem \ref{irreducible-particular-M-76}. The statement $(v)$ is obtained by using $(iii)$ and Theorem \ref{q-ineq-sectral-7650-v2}.
\end{proof}

\begin{rem}
When $n>1$, $M$ is irreducible and $\Phi_{T_{{\bf k},M}}$ has at least two elements
we have $\lfloor {\bf y}_{T_{{\bf k},M}}^{\Box},\infty\lceil\subset \bigcup_{{\bf y}\in \Phi_{T_{{\bf k},M}}\backslash\{{\bf y}_{T_{{\bf k},M}}^{\Box}\}}\rfloor{\bf y},\infty\lceil$.
From the theory of discrete dynamical systems we have that a fixed point ${\bf y}$, ${\bf y}\neq {\bf y}_{T_{{\bf k},M}}^{\Box}$  is not stable for the fixed point iteration method. 
\end{rem}

\begin{rem}\label{reducere-ireductibil}
In the case when $M$ is a reducible matrix we can reduce the equation \eqref{ecuatia-fix-increasing-321} (the determination of the fixed points of $T_{{\bf k},M}$) to a system of equations of the form \eqref{ecuatia-fix-increasing-321} and in each equation we have an irreducible matrix (possibly a 1-by-1 zero matrix). From Lemma \ref{bounded-TkM}, the equation \eqref{ecuatia-fix-increasing-321} is equivalent with an equation of the same form in which the matrix $M$ is in irreducible normal form (see Section \ref{irreducible-matrices}). 

If $M$ is in the irreducible normal form \eqref{irreducible-normal-form-01}, then \eqref{ecuatia-fix-increasing-321} becomes
\begin{equation}
\begin{cases}
{\bf k}_1-M_{11}\frac{1}{{\bf y}_1} -M_{12}\frac{1}{{\bf y}_2}-\dots -M_{1\,s-1}\frac{1}{{\bf y}_{s-1}}-M_{1s}\frac{1}{{\bf y}_s}={\bf y}_1 \\
{\bf k}_2-M_{22}\frac{1}{{\bf y}_2}- \dots -M_{2\,s-1}\frac{1}{{\bf y}_{s-1}} -M_{2s}\frac{1}{{\bf y}_s}={\bf y}_2 \\
\dots \\
{\bf k}_{s-1}-M_{s-1\,s-1}\frac{1}{{\bf y}_{s-1}}-M_{s-1\,s}\frac{1}{{\bf y}_{s}}={\bf y}_{s-1} \\
{\bf k}_s-M_{ss}\frac{1}{{\bf y}_s}={\bf y}_s,
\end{cases}
\end{equation}
where ${\bf y}=({\bf y}_1,\dots,{\bf y}_s)^T$, ${\bf y}_i\in \R^{n_i}$. 
To determine $\Phi_{T_{{\bf k},M}}$ we will follow these steps:
\begin{itemize}
\item We find $\Phi_{T_{{\bf k}_s,M_{ss}}}$.
\item For ${\bf y}_s\in \Phi_{T_{{\bf k}_s,M_{ss}}}$ we find $\Phi_{T_{\widetilde{\bf k}_{s-1},M_{s-1\,s-1}}}$, $\widetilde{\bf k}_{s-1}={\bf k}_{s-1}-M_{s-1\,s}\frac{1}{{\bf y}_s}$.
\item ...
\item For ${\bf y}_s\in \Phi_{T_{{\bf k}_s,M_{ss}}}$, ..., ${\bf y}_2\in \Phi_{T_{\widetilde{\bf k}_2,M_{22}}}$, $\widetilde{\bf k}_{2}={\bf k}_{2}-M_{23}\frac{1}{{\bf y}_3}-\dots - M_{2s}\frac{1}{{\bf y}_s}$   we find $ \Phi_{T_{\widetilde{\bf k}_1,M_{11}}}$, $\widetilde{\bf k}_{1}={\bf k}_{1}-M_{12}\frac{1}{{\bf y}_2}-\dots - M_{1s}\frac{1}{{\bf y}_s}$.
\end{itemize}
We observe that ${\bf y}_{T_{{\bf k},M}}^{\Box}$ has the components ${\bf y}^{\Box}_{T_{{\bf k}^{\Box}_1,M_{11}}}$, ${\bf y}^{\Box}_{T_{{\bf k}^{\Box}_2,M_{22}}}$, ..., ${\bf y}^{\Box}_{T_{{\bf k}^{\Box}_{s-1},M_{s-1\,s-1}}}$, ${\bf y}^{\Box}_{T_{{\bf k}_s,M_{ss}}}$, where ${\bf k}^{\Box}_{1}={\bf k}_{1}-M_{12}\frac{1}{{\bf y}^{\Box}_{T_{{\bf k}^{\Box}_2,M_{22}}}}-\dots - M_{1s}\frac{1}{{\bf y}^{\Box}_{T_{{\bf k}_s,M_{ss}}}}$, ..., ${\bf k}_{s-1}^{\Box}={\bf k}_{s-1}-M_{s-1\,s}\frac{1}{{\bf y}^{\Box}_{T_{{\bf k}_s,M_{ss}}}}$.
\end{rem}

\begin{rem}
If $M$ is a nonnegative, invertible matrix, $M^{-1}$ is nonnegative, and ${\bf z}=\frac{1}{\bf y}$, then the equation \eqref{ecuatia-fix-increasing-321} is equivalent with the equation 
\begin{equation}
{\bf z}=M^{-1}{\bf k}-M^{-1}\frac{1}{\bf z},
\end{equation} 
which has the same form as the equation \eqref{ecuatia-fix-increasing-321}. 
It has been proved that a nonnegative matrix has a nonnegative inverse if and only if its entries are all zero except for a single positive entry in each row and column (see \cite{brown}, \cite{demarr}). The matrix $M$ is
the product of a permutation matrix by a diagonal matrix. In this case, in addition to the dominant fixed point we have a fixed point dominated by all other fixed points (see also Remark \ref{diagonal-matrix-11}) . 
\end{rem}

\subsection{The case $n=1$}

In this case we have $T_{k,M}:\R_+^*\to \R$, $T_{k,M}(y)=k-\frac{M}{y}$ with $k\in \R$ and $M\in \R_+$.

For $M=0$ the function $T_{k,M}$ has fixed points if and only if $k>0$. If the above inequality is satisfied, then $\Phi_{T_{k,0}}=\{k\}$.

For $M>0$ we have $\Phi_{T_{k,M}}\neq \emptyset$ if and only if $k\geq 2\sqrt{M}$. If the above inequality is satisfied, then $\Phi_{T_{k,M}}=\{y^{\Box},y^*\}$ with $y^{\Box}=\frac{1}{2}(k+\sqrt{k^2-4M})$ and $y^{*}=\frac{1}{2}(k-\sqrt{k^2-4M})$. 

Figure \ref{1D-dynamics} shows the dynamics generated by of the fixed point iteration method.

\begin{figure}[h]
\begin{center}
\includegraphics[width=15cm]{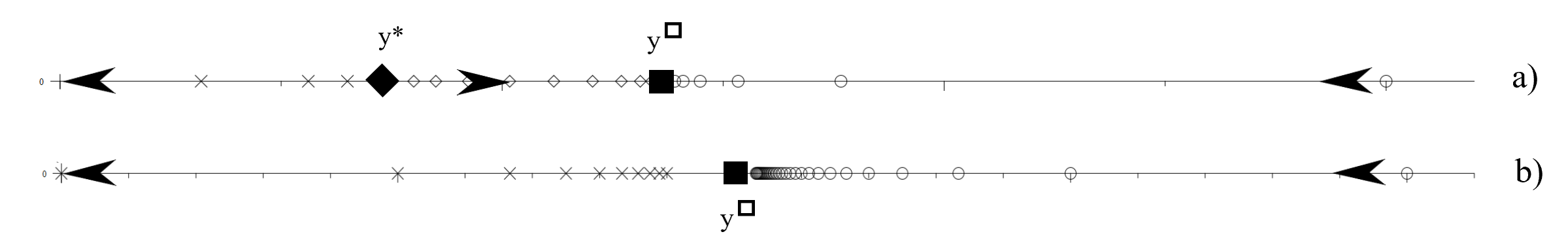}
\caption{The 1-D fixed point iteration sequences.
a) $k> 2\sqrt{M}$; b) $k=2\sqrt{M}$.}\label{1D-dynamics}
\end{center}
\end{figure}

\subsection{Numerical simulation for the steady states of a DC linear circuit with two CPLs}

The mathematical representation of the constant steady states for a DC linear circuit with two CPLs, which is first studied in \cite{polyak} and then re-studied in \cite{matveev}, is
\begin{equation}\label{ecuatie-Matveev-90}
\begin{pmatrix} v_1 \\ v_2 \end{pmatrix}= \begin{pmatrix} E \\ E \end{pmatrix}-\begin{pmatrix} r_1 P_1 & r_1 P_2 \\ r_1 P_1 & (r_1+r_2)P_2 \end{pmatrix}\begin{pmatrix} \frac{1}{v_1} \\ \frac{1}{v_2} \end{pmatrix}.
\end{equation}
In the above system $v_1, v_2$ are the voltage of the capacitors, $P_1, P_2$ are the power of the CPLs, $r_1$ and $r_2$ are the line resistances, and $E$ is the voltage source. We work with the following numerical values taken from \cite{matveev}: 
$E=24\,V$, $r_1=0.04\,\Omega$, and $r_2=0.06\,\Omega$.\medskip

{\bf I. The case $P_1=500\,W$, $P_2=450\,W$.}

In this case we have $\Delta=\begin{pmatrix} 496 \\ 396 \end{pmatrix}$, ${\bf v}^{\min}=\begin{pmatrix}  0.86 \\ 2.05 \end{pmatrix}$, and ${\bf v}^{\max}=\begin{pmatrix}  23.13 \\ 21.94 \end{pmatrix}$. The necessary conditions for the existence of a steady state presented in Proposition \ref{incadrare-fix-crescator-89} are satisfied. All these points are contained in $\lfloor {\bf v}^{\min},{\bf v}^{\max}\rceil$(Proposition \ref{incadrare-fix-crescator-89}). The fixed point iteration sequence which starts from ${\bf v}^{\max}$ is convergent and its limit is the dominant fixed point ${\bf v}^{\Box}=\omega({\bf v}^{\max})=\begin{pmatrix}  22.94 \\ 20.95 \end{pmatrix}$. This point is a solution of \eqref{ecuatie-Matveev-90} and it dominates all other positive solutions (Theorem \ref{increasing-dominates-22}). By computation we obtain that this system has two positive solutions: ${\bf v}^{\Box}$ and ${\bf v}^{*}=\begin{pmatrix}  14.45 \\ 2.20 \end{pmatrix}$. The fixed point iteration sequence which start from $\rfloor {\bf v}^*, \infty\lceil$ is convergent and its limit is ${\bf v}^{\Box}$ (Theorem \ref{q=n-irreducible-045}).

{\bf II. The case $P_1=3000\,W$, $P_2=1000\,W$.}
In this case we have $\Delta=\begin{pmatrix} 96 \\ 196 \end{pmatrix}$, ${\bf v}^{\min}=\begin{pmatrix}  7.10 \\ 5.36 \end{pmatrix}$, and ${\bf v}^{\max}=\begin{pmatrix}  16.89 \\ 18.63 \end{pmatrix}$. The necessary conditions for the existence of a steady state presented in Proposition \ref{incadrare-fix-crescator-89} are satisfied. When we analyze the fixed point iteration sequence starting from ${\bf v}^{\max}$ we observe that the term ${\bf v}_3=
\begin{pmatrix} 8.76 \\ 0.42 \end{pmatrix}$ is not comparable with ${\bf v}^{\min}$. We deduce, from Theorem \ref{existenta-k=n-909}, that the system \eqref{ecuatie-Matveev-90} has no positive solutions.
\begin{figure}[h]
\includegraphics[width=10cm]{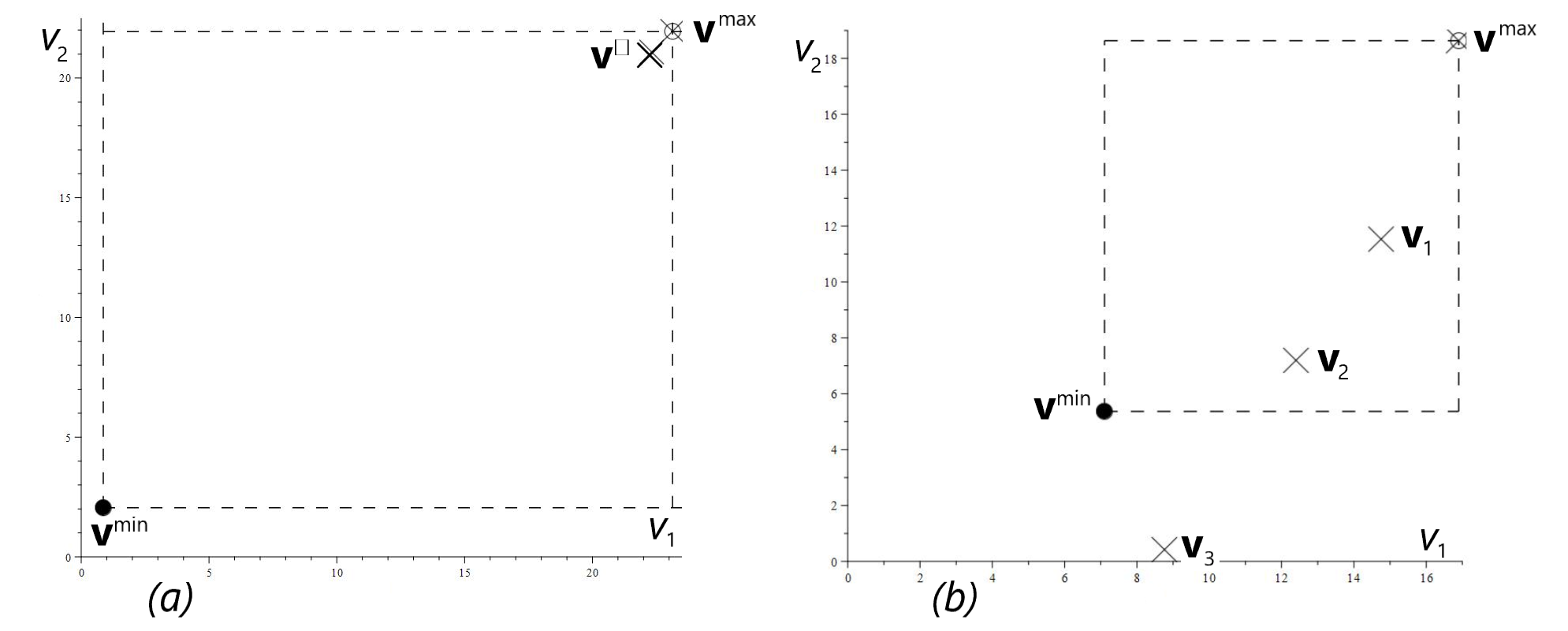}
\centering
\caption{{\small The fixed point iteration sequence which start from ${\bf v}^{\max}$. $(a)$ - $P_1=500\,W$, $P_2=450\,W$; $(b)$ - $P_1=3000\,W$, $P_2=1000\,W$.}}\label{2-CPLs-figure-99}
\centering
\end{figure}

Figure \ref{2-CPLs-figure-99} present the fixed point iteration sequences which start from ${\bf v}^{\max}$.

\subsection{Conclusions}

We present a practical method, using a fixed point iteration sequence, to decide whether an isotone electric systems has a steady state (Theorem \ref{existenta-k=n-909}). When their existence is assured
a dominant steady state may be highlighted. The dominant steady state is the limit of some sequences generated by the fixed point iteration method (Theorem \ref{increasing-dominates-22}) and we specify a part of its domain of attraction (Theorem \ref{increasing-dominates-22} and Theorem \ref{q=n-irreducible-045}).

We pay special attention to the case for which the function is defined by an irreducible matrix. In this case we have:

\noindent - the dominant steady state is an isolated steady state (Theorem \ref{q=n-irreducible-045});

\noindent - a steady state, other than the dominant steady state, is strictly smaller than the dominant steady state (Theorem \ref{q=n-irreducible-045});

\noindent - two steady states, different from the dominant dominant steady state, are incomparable (Theorem \ref{q=n-irreducible-045});

\noindent - a fixed point iteration sequence which start from a vector strictly greater than a steady state is convergent and its limit is the dominant steady state (Theorem \ref{q=n-irreducible-045}). 

When the matrix which appears in the isotone electric system is reducible, the determination of the steady states can be reduced to the determination of the steady states of some isotone electric systems defined by irreducible matrices (Remark \ref{reducere-ireductibil}).

\appendix

\section{Irreducible matrices}\label{irreducible-matrices}

A matrix $A\in \mathcal{M}_{n}(\R)$, $n>1$, is reducible (see Definition 6.2.21, \cite{horn}) if there is a permutation matrix $P\in \mathcal{M}_{n}(\R)$ such that 
\begin{equation}
P^TAP=\begin{pmatrix}
B & C \\
O_{(n-r)\times r} & D
\end{pmatrix}\,\,
\text{and}\,\,1\leq r\leq n-1.
\end{equation}
A lower-left $(n-r)\times r$ block of zero entries can be created by some sequence of row and column interchanges. 
A matrix $A\in \mathcal{M}_{n}(\R)$, $n>1$, is irreducible if it is not reducible (see Definition 6.2.22, \cite{horn}). All matrices of $\mathcal{M}_1(\R)$ are irreducible.

\begin{lem}\label{irreducible-produs-99}
Let be $A\in \mathcal{M}_{n}(\R)$ an irreducible matrix.
\begin{enumerate}[(i)]
\item If ${\bf d}\in \left(\R_+^*\right)^n$, then $A\,(\emph{diag}\,{\bf d})$ is irreducible.
\item If $A$ is nonnegative, $n>1$, ${\bf d}, {\bf e}\in \left(\R_+^*\right)^n$ with ${\bf d}\lneq {\bf e}$, then $A\,(\emph{diag}\,{\bf d})\lneq A\,(\emph{diag}\,{\bf e})$.
\end{enumerate}
\end{lem}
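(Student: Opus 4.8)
The plan is to base everything on the elementary observation that right-multiplication of $A$ by the diagonal matrix $\text{diag}\,{\bf d}$ only rescales columns: one has $(A\,\text{diag}\,{\bf d})_{ij}=A_{ij}d_{j}$ for all $i,j$. Since each $d_{j}>0$, the entry $(A\,\text{diag}\,{\bf d})_{ij}$ vanishes exactly when $A_{ij}$ does, so $A\,\text{diag}\,{\bf d}$ has precisely the same zero--nonzero pattern as $A$. Both parts of the lemma will then follow from this pattern identity together with the sign of the scaling factors.

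For part $(i)$ I would argue directly from the definition of reducibility recalled in Appendix \ref{irreducible-matrices}. For any permutation matrix $P$ one has $P^{T}(A\,\text{diag}\,{\bf d})P=(P^{T}AP)(P^{T}\,\text{diag}\,{\bf d}\,P)$, and $P^{T}\,\text{diag}\,{\bf d}\,P=\text{diag}\,(P^{T}{\bf d})$ is again a diagonal matrix with strictly positive diagonal. Because right-multiplication by such a matrix preserves the location of zero entries, the lower-left block of $P^{T}(A\,\text{diag}\,{\bf d})P$ is the zero block exactly when the corresponding block of $P^{T}AP$ is. Hence $A\,\text{diag}\,{\bf d}$ admits a reducing permutation if and only if $A$ does, and the irreducibility of $A$ yields the irreducibility of $A\,\text{diag}\,{\bf d}$.

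For part $(ii)$ the non-strict inequality is immediate: from $A\geq 0$ and $d_{j}\leq e_{j}$ one gets $A_{ij}d_{j}\leq A_{ij}e_{j}$ for every entry, that is $A\,\text{diag}\,{\bf d}\leq A\,\text{diag}\,{\bf e}$. The point requiring irreducibility is the existence of a strictly larger entry. Since ${\bf d}\lneq{\bf e}$, I fix an index $j_{0}$ with $d_{j_{0}}<e_{j_{0}}$ and claim that the $j_{0}$-th column of $A$ is not identically zero; granting this, any $i_{0}$ with $A_{i_{0}j_{0}}>0$ gives $A_{i_{0}j_{0}}d_{j_{0}}<A_{i_{0}j_{0}}e_{j_{0}}$, so the two products differ and the strict order $A\,\text{diag}\,{\bf d}\lneq A\,\text{diag}\,{\bf e}$ holds.

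The main obstacle is precisely this no-zero-column fact, and I expect to settle it within the language of Appendix \ref{irreducible-matrices} rather than invoking graph-theoretic irreducibility. If column $j_{0}$ of $A$ were zero, then choosing a permutation $P$ that moves index $j_{0}$ into the first coordinate produces, with $r=1$, a matrix $P^{T}AP$ whose lower-left $(n-1)\times 1$ block is formed from entries of the zero column $j_{0}$ and therefore vanishes; since $n>1$ the admissible range $1\leq r\leq n-1$ is nonempty, so this certifies that $A$ is reducible, contradicting the hypothesis. Therefore no column of an irreducible $A$ (with $n>1$) can be zero, the index $i_{0}$ above exists, and the proof is complete.
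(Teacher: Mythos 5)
Your proof is correct and follows essentially the same route as the paper's: part $(i)$ rests on the observation that $(A\,\mathrm{diag}\,{\bf d})_{ij}=A_{ij}d_j$ preserves the zero pattern, and part $(ii)$ picks an index with $d_{j_0}<e_{j_0}$ and uses a positive entry in column $j_0$. The only difference is that you explicitly justify the two steps the paper leaves implicit (that conjugation by $P$ commutes with the diagonal rescaling, and that an irreducible matrix with $n>1$ has no zero column), and both justifications are sound.
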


\begin{proof}
$(i)$ If $B:=A\,(\text{diag}\,{\bf d})$, then $B_{ij}=d_jA_{ij}$, $i,j\in \{1,\dots,n\}$ and $B_{ij}=0\Leftrightarrow A_{ij}=0$.

$(ii)$ It is easy to observe that  $A\,(\text{diag}\,{\bf d})\leq A\,(\text{diag}\,{\bf e})$.  
There exists $r\in \{1,\dots,n\}$ such that $d_r<e_r$. Because $A$ is irreducible we have an element $A_{ir}>0$, $i\in \{1,\dots,n\}$. Because $A_{ir}d_r<A_{ir}e_r$ we obtain the announced result.
\end{proof}

The matrix $M\in \mathcal{M}_n(\R)$ is in irreducible normal form, see \cite{horn}, if it is block upper triangular, and each diagonal block is irreducible (possibly a 1-by-1 zero matrix); more precisely,
\begin{equation}\label{irreducible-normal-form-01}
M=\begin{pmatrix} M_{11} & M_{12} & \dots & M_{1\,s-1} & M_{1s} \\
O & M_{22} & \dots & M_{1\,s-1} & M_{2s} \\
\vdots & \vdots & \ddots & \vdots & \vdots \\
O & O & \dots & M_{s-1\,s-1} & M_{s-1\,s} \\
O & O & \dots & O & M_{ss}
\end{pmatrix}, \,\,1\leq s\leq n,
\end{equation}
with $M_{11}\in \mathcal{M}_{n_1}(\R_+),\dots,M_{ss}\in \mathcal{M}_{n_s}(\R_+)$ irreducible matrices, $n_1\geq 1$, ..., $n_s\geq 1$, and $\sum_{i=1}^sn_i=n$.

An irreducible normal form of $A\in \mathcal{M}_n(\R)$ is a matrix $B=P^TAP$ in an irreducible normal form, with $P$ a permutation matrix. 

An irreducible normal form of the matrix $A$ is not necessarily unique.
If $A\in \mathcal{M}_n(\R)$ is irreducible then it is in irreducible normal form.
For $A\in \mathcal{M}_n(\R)$ there exists $B\in \mathcal{M}_n(\R)$ such that $B$ is the irreducible normal form of $A$.

\section{The fixed points, the fixed point iteration sequences, and the $\omega$-limit set}\label{fixed-point-iteration-section}

Let $T:\left(\R_+^*\right)^n\to \R^n$ be a continuous function. The vector ${\bf y}\in\left(\R_+^*\right)^n$ is a fixed point of $T$ if $T({\bf y})={\bf y}$. We denote by $\Phi_T$ the set of fixed points of $T$.

We consider the sequence of open sets
$D_{T^{(1)}}=\left(\R_+^*\right)^n,\,\,D_{T^{(r+1)}}=T^{-1}\left(D_{T^{(r)}}\right)\bigcap D_{T^{(r)}}\subset \left(\R_+^*\right)^n ,\,r\geq 1,$
and the composite functions $T^{(r)}:D_{T^{(r)}}\to \R^n$ given by $T^{(r)}=\stackrel{r\,\text{times}}{T\circ\dots\circ T}$. Also, we denote by $D_T^{\infty}=\bigcap_{r\in \mathbb{N}^*}D_{T^{(r)}}$. 

For ${\bf y}\in \left(\R_+^*\right)^n$ we construct the number $N_{\bf y}\in \mathbb{N}\cup\{\infty\}$, 
$$N_{\bf y}=\begin{cases}
\infty & \text{if}\,\,{\bf y}\in D_T^{\infty} \\
\min\{r\in \mathbb{N}\,|\,{\bf y}\notin D_{T^{(r+1)}}\} & \text{if}\,\,{\bf y}\in \left(\R_+^*\right)^n\backslash D_T^{\infty}.
\end{cases}$$

The fixed point iteration sequence $\text{Fit}_{T,{\bf y}_0}:=({\bf y}_r)_{r\in \{0,\dots,N_{{\bf y}_0}\}}$ starts from the vector  ${\bf y}_0\in \left(\R_+^*\right)^n$ and it verifies the iteration
\begin{equation}
{\bf y}_{r+1}=T({\bf y}_r),\,\,\,r<N_{{\bf y}_0}.
\end{equation}

It is easy to observe that we have the following results.
\begin{lem}\label{isotone-fixed-sequence-23}
Suppose that $T$ is isotone. Let be ${\bf y}_0,\overline{\bf y}_0\in \left(\R_+^*\right)^n$ such that ${\bf y}_0\leq \overline{\bf y}_0$.
\begin{enumerate}[(i)]
\item $N_{{\bf y}_0}\leq N_{\overline{\bf y}_0}$. If ${\bf y}_0\in D_{T}^{\infty}$, then $\lfloor{\bf y}_0,\infty\lceil\subset D_{T}^{\infty}$.
\item If $r\in \mathbb{N}$ with $r\leq N_{{\bf y}_0}$, then ${\bf y}_r\leq \overline{\bf y}_r$.
\end{enumerate}
\end{lem}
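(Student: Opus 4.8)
The plan is to prove both assertions simultaneously by a single induction on $r$, since they are interdependent: in order to form $\overline{\bf y}_{r+1}=T(\overline{\bf y}_r)$ one first needs $\overline{\bf y}_r\in \left(\R_+^*\right)^n$, i.e. $r<N_{\overline{\bf y}_0}$, whereas the comparison ${\bf y}_r\leq \overline{\bf y}_r$ is exactly the fact that will guarantee this positivity. Accordingly I would strengthen the inductive statement to the following: for every $r$ with $0\leq r\leq N_{{\bf y}_0}$ one has $r\leq N_{\overline{\bf y}_0}$ (so that $\overline{\bf y}_r$ is defined) and ${\bf y}_r\leq \overline{\bf y}_r$. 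Proving these two facts in sequence rather than together is the trap to avoid; the anticipated obstacle is purely the bookkeeping of not invoking $\overline{\bf y}_r$ before its membership in the domain has been established.

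The base case $r=0$ is just the hypothesis ${\bf y}_0\leq \overline{\bf y}_0$ together with the trivial bound $0\leq N_{\overline{\bf y}_0}$. For the inductive step, assume the statement for some $r<N_{{\bf y}_0}$; then ${\bf y}_r\in \left(\R_+^*\right)^n$. The key observation — really the only non-formal point — is that positivity propagates upward under $\leq$: from ${\bf 0}<{\bf y}_r\leq \overline{\bf y}_r$ we get $\overline{\bf y}_r\in \left(\R_+^*\right)^n$, hence $r<N_{\overline{\bf y}_0}$ and $\overline{\bf y}_{r+1}=T(\overline{\bf y}_r)$ is well defined. Since both arguments now lie in $\left(\R_+^*\right)^n$ and ${\bf y}_r\leq \overline{\bf y}_r$, isotonicity of $T$ gives ${\bf y}_{r+1}=T({\bf y}_r)\leq T(\overline{\bf y}_r)=\overline{\bf y}_{r+1}$, which closes the induction. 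I would remark that this final inequality is a comparison in $\R^n$ and remains valid even at the terminal index $r+1=N_{{\bf y}_0}$, where ${\bf y}_{N_{{\bf y}_0}}$ need not be positive.

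Part $(ii)$ is then immediate: for $r\leq N_{{\bf y}_0}$ the strengthened induction already yields ${\bf y}_r\leq \overline{\bf y}_r$. For the first assertion of $(i)$, the propagation of positivity shows that every index $r<N_{{\bf y}_0}$ also satisfies $r<N_{\overline{\bf y}_0}$, whence $N_{{\bf y}_0}\leq N_{\overline{\bf y}_0}$, covering uniformly the finite case and the case $N_{{\bf y}_0}=\infty$. Finally, if ${\bf y}_0\in D_T^{\infty}$, i.e. $N_{{\bf y}_0}=\infty$, then for any $\overline{\bf y}_0\in \lfloor {\bf y}_0,\infty\lceil$ we have $\overline{\bf y}_0\geq {\bf y}_0>{\bf 0}$, so $\overline{\bf y}_0\in \left(\R_+^*\right)^n$, and $N_{\overline{\bf y}_0}\geq N_{{\bf y}_0}=\infty$ forces $\overline{\bf y}_0\in D_T^{\infty}$; hence $\lfloor {\bf y}_0,\infty\lceil\subset D_T^{\infty}$. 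There is no deeper difficulty beyond keeping the order comparison and the domain containment synchronized at each step, which is precisely why the two parts are most naturally run through one induction.
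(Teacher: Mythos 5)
Your proof is correct: the simultaneous induction that propagates the order comparison ${\bf y}_r\leq \overline{\bf y}_r$ together with the domain membership $r\leq N_{\overline{\bf y}_0}$ (using that ${\bf 0}<{\bf y}_r\leq \overline{\bf y}_r$ forces $\overline{\bf y}_r\in\left(\R_+^*\right)^n$) is exactly the intended argument. The paper offers no proof at all --- the lemma is introduced with ``It is easy to observe that we have the following results'' --- so your write-up simply supplies the details the author omitted, and it does so without gaps.
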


For ${\bf y}_0\in D_T^{\infty}$ we work with the $\omega$-limit set of $\text{Fit}_{T,{\bf y}_0}$ defined by 
\begin{equation}
\omega_T({\bf y}_0)=\{{\bf y}\in \R_+^n\,|\,\exists\,\text{the subsequence}\, ({\bf y}_{k_q})_{q\in \mathbb{N}}\,\text{of}\, \text{Fit}_{T,{\bf y}_0}\, \text{such that} \,{\bf y}_{k_q}\stackrel{q\to \infty}{\longrightarrow} {\bf y}\}.
\end{equation}
From the continuity of $T$ we obtain that the set $\omega_T({\bf y}_0)$ is invariant under $T$. When $\text{Fit}_{T,{\bf y}_0}$ is bounded from above by ${\bf w}$, then it is contained in the compact set $\lfloor{\bf 0},{\bf w}\rceil$, and, consequently, $\omega_T({\bf y}_0)\neq \emptyset$. 
When $\text{Fit}_{T,{\bf y}_0}$ is convergent we denote by $\omega_T({\bf y}_0)$ its limit. If $\text{Fit}_{T,{\bf y}_0}$ is convergent and $\omega_T({\bf y}_0)\in \left(\R_+^*\right)^n$, then $\omega_T({\bf y}_0)$ is a fixed point of $T$. 
 
The $\omega$-limit set of $T$ (with respect to the fixed point iteration method) is:
\begin{equation}
\Omega_T=\{{\bf y}\in \R_+^n\,|\,\exists {\bf y}_0\in D_T^{\infty}\,\text{such that}\,{\bf y}\in \omega_T({\bf y}_0)\}.
\end{equation}
It is easy to observe that we have $\Phi_T\subset\Omega_T$.


\begin{thebibliography}{99}
\bibitem{polyak} N. Barabanov, R. Ortega, R. Gri\~n\'o, B. Polyak, {\it On Existence and Stability of Equilibria of Linear
Time-Invariant Systems With Constant Power Loads}, IEEE Trans. Circuits Syst. I, Reg. Papers, 63, no.1 (2016),
pp. 114–121.
\bibitem{brown} T.A. Brown, M. Juncosa, V. Klee, {\it Invertibly positive linear operators on
spaces of continuous functions}, Math. Ann., 183 (1969), pp. 105-114.
\bibitem{chandler} D. Chandler, {\it The norm of the Schur product operation}, Numer. Math., 4, no. 1 (1962), pp. 343–44. 
\bibitem{demarr} R. DeMarr, {\it Nonnegative matrices with nonnegative inverses}, Proc. Amer. Math. Soc., 35, no. 1 (1972), pp. 307-308.
\bibitem{ehrgott} M. Ehrgott, {\it Multicriteria Optimization. Second edition, Springer}, 2005.
\bibitem{holmes} J. Guckenheimer, P. Holmes, {\it Nonlinear Oscilations, Dynamical Systems, and Bifurcations of Vector Fields (corrected third printing), Springer-Verlag}, 1990. 
\bibitem{horn} R. A. Horn, C. R. Johnson, {\it Matrix Analysis, Second Edition, Cambridge University Press}, 2013.
\bibitem{matveev} A. S. Matveev, J. E. Machado, R. Ortega, J. Schiffer, A. Pyrkin, {\it A Tool for Analysis of Existence of Equilibria and Voltage Stability in Power Systems With Constant Power Loads}, IEEE Trans. Automat. Contr., 65, no. 11 (2020), pp. 4726-4740.
\bibitem{micchelli} C. A. Micchelli, R. A. Willoughby, {\it On Functions Which Preserve the Class of Stieltjes Matrices}, Linear Algebra Appl., 23 (1979), pp. 141-156. 
\bibitem{sanchez} S. Sanchez, R. Ortega, R. Gri\~n\'o, G. Bergna, M. Molinas-Cabrera, {\it Conditions for Existence of Equilibrium Points of Systems with Constant Power Loads}, IEEE Trans. Circuits Syst. I, Reg. Papers, vol. 61,
no. 7 (2014), pp. 2204–2211.
\end{thebibliography}
\end{document}